\newtheorem{theorem}{Theorem}
\newtheorem{proposition}{Proposition}
\newtheorem{lemma}{Lemma}
\theoremstyle{definition}
\newtheorem{definition}{Definition}
\theoremstyle{remark}
\newtheorem{remark}{Remark}
\newtheorem{example}{Example}
\begin{document}

\title{A Mathematical Problem For Security Analysis Of Hash Functions And Pseudorandom Generators}
\author{Koji Nuida\footnote{%
Combutational Biology Research Center (CBRC) / Research Institute for Secure Systems (RISEC), %
National Institute of Advanced Industrial Science and Technology (AIST), Japan %
(\texttt{k.nuida@aist.go.jp})%
}, Takuro Abe\footnote{%
Department of Mechanical Engineering and Science, Kyoto University, Japan %
(\texttt{abe.takuro.4c@kyoto-u.ac.jp})%
}, Shizuo Kaji\footnote{%
Department of Mathematical Sciences, Faculty of Science, Yamaguchi University, Japan %
(\texttt{skaji@yamaguchi-u.ac.jp})%
}, Toshiaki Maeno\footnote{%
Department of Mathematics, Meijo University, Japan %
(\texttt{tmaeno@meijo-u.ac.jp})%
}, Yasuhide Numata\footnote{%
Department of Mathematical Sciences, Shinshu University, Japan %
(\texttt{nu@math.shinshu-u.ac.jp})%
}}
\date{\today}

\maketitle

\begin{abstract}
In this paper, we specify a class of mathematical problems, which we refer to as \lq\lq Function Density Problems'' (FDPs, in short), and point out novel connections of FDPs to the following two cryptographic topics; theoretical security evaluations of keyless hash functions (such as SHA-1), and constructions of provably secure pseudorandom generators (PRGs) with some enhanced security property introduced by Dubrov and Ishai (STOC 2006).
Our argument aims at proposing new theoretical frameworks for these topics (especially for the former) based on FDPs, rather than providing some concrete and practical results on the topics.
We also give some examples of mathematical discussions on FDPs, which would be of independent interest from mathematical viewpoints.
Finally, we discuss possible directions of future research on other cryptographic applications of FDPs and on mathematical studies on FDPs themselves.
\end{abstract}

\section{Introduction}
\label{sec:intro}

\subsection{Background and related works}
\label{subsec:intro_background}

It is widely understood that some mathematical problems have been playing indispensable roles in research on cryptography and information security.
For instance, the (expected) computational difficulty of integer factorization is the source of security of RSA cryptosystem \cite{RSA78}, while the problem of solving multivariate quadratic (MQ) equations has attracted several studies after the development of Matsumoto-Imai cryptosystem \cite{MI88} and its variants, whose constructions are closely related to MQ equations.
Hence, posing and studying an interesting mathematical problem which arises in certain cryptographic settings can contribute to the progress of cryptography and information security.

The aim of this paper is to emphasize the significance of a certain mathematical problem, which has connections to the following two major topics in information security; security analysis of keyless hash functions in the real world (such as MD5 and SHA-1), and construction of pseudorandom generators (PRGs) with some enhanced security property.
First, we give some descriptions of these two topics.

\paragraph{Security analysis of keyless hash functions.}

Intuitively, a hash function is a function $H \colon X \to Y$ from some (finite) set $X$ to another (finite) set $Y$ that possesses a certain desirable security property.
When we concern efficiency or computability of $H$, we consider an algorithm that computes $H$ (also denoted by $H$) and call it a hash algorithm.
One of the standard security requirements for hash functions is \emph{collision resistance}, which informally means that it is difficult to find a collision pair $(x_1,x_2)$ for $H$, i.e., $x_1 \neq x_2 \in X$ satisfying $H(x_1) = H(x_2)$.
Hash functions have been playing central roles in various information security applications, and secure hash functions for real-life applications are usually expected to possess the collision resistance property.

However, most of the preceding successful studies that show security of hash functions actually dealt with \emph{keyed} hash functions (or hash \emph{families}); intuitively, a family of hash functions $H_k$ parameterized by a key $k$ is called collision resistant if, for any (efficient) adversary, the attack to find a collision pair of $H_k$ fails with high probability for a randomly chosen key $k$.
Several constructions of keyed hash functions have been proposed so far (e.g., \cite{CW77}).
The above security notion of keyed hash functions can be interpreted as allowing one to (randomly) choose a concrete instance $H_k$ of the hash family \emph{after} an adversary is given.
In contrast, in most of real-life applications, the concrete instance of hash algorithms is specified first (for example, by a standardization), and then an adversary can try to attack the fixed hash algorithm.
This reversal of order causes a crucial difficulty in guaranteeing (or even formalizing in a reasonable manner) security of a keyless hash algorithm $H$, as (unless the trivial situation where the domain of $H$ is not larger than the image of $H$) there \emph{does} always \emph{exist} a collision pair $(x_1,x_2)$ for $H$ and any adversary (existing in theory) who innately knows the pair $(x_1,x_2)$ is obviously able to efficiently attack the fixed hash algorithm $H$.
In fact, even an instance of standardized (or de facto standard) hash algorithms, whose security must be evaluated well before the standardization, has been suffered from feasible attacks (e.g., \cite{WYY05}).
In this paper, we try to propose a theoretical and unified way to say something, preferably affirmative, about security of a concrete (keyless) instance of hash algorithms.

For related works, Rogaway \cite{Rog06} gave a detailed observation about the difference between \lq\lq inexistence of effective attack algorithms'' and \lq\lq lack of knowledge on construction of effective attack algorithms'' for keyless hash algorithms.
He emphasized the difference of the two situations (by the term \lq\lq human ignorance''), and discussed how to prove security of a cryptographic protocol by reducing the security into \lq\lq lack of knowledge on concrete attacks'' on the hash algorithm internally used by the protocol.
However, he did not discuss how to theoretically evaluate security of keyless hash algorithms themselves, which we study in this paper.
On the other hand, in this paper we adopt concrete security formulation rather than asymptotic one; while some observation for security of keyless hash algorithms in asymptotic security formulation is also given in Rogaway's paper.

\paragraph{Construction of enhanced PRGs.}

A PRG is an algorithm $G \colon S \to X$ with (finite) set $S$ of inputs (\emph{seeds}) and (finite) output set $X$ with the property that, when a seed $s \in S$ is chosen uniformly at random, the output $G(s) \in X$ of $G$ is also \lq\lq random'' in some sense.
Conventionally, the meaning of \lq\lq randomness'' here is formulated by using the notion of \emph{distinguisher}, which is an algorithm $D \colon X \to \{0,1\}$ with $1$-bit output and the input set being the output set $X$ of $G$.
In this paper we adopt concrete security formulation rather than asymptotic one, in which case the security requirement for PRGs can be formulated as $(T,\varepsilon)$-security; namely, $G$ is called \emph{$(T,\varepsilon)$-secure} \cite{FSS07} if, for any distinguisher $D$ for $G$ with (time) complexity bounded by $T$, the statistical distance between the output distribution $D(G(U_S))$ of $D$ with input given by $G$ with uniformly random seed $s \in S$ (referred to as \lq\lq pseudorandom input'') and the output distribution $D(U_X)$ of $D$ with uniformly random input $x \in X$ (referred to as \lq\lq random input'') is bounded by $\varepsilon$.
(Intuitively, any such $D$ cannot distinguish the random element $x$ and the pseudorandom element $G(s)$ in $X$ with significant advantage.)
There are a large number of constructions of PRGs, most of which are provably secure (possibly in asymptotic security formulation) under standard computational assumptions (e.g., \cite{BBS86,FSS07}).

On the other hand, in a preceding work of Dubrov and Ishai \cite{DI06}, an enhanced notion for PRGs, called \emph{pseudorandom generators that fool non-boolean distinguishers} (\emph{nb-PRGs}, in short), was proposed.
This notion is obtained by allowing the distinguishers $D$ in the above security notion to have larger output sets; namely, $G$ is called \emph{$(T,n,\varepsilon)$-secure} if, for any \lq\lq non-boolean'' distinguisher $D \colon X \to Y$ for $G$ with (time) complexity bounded by $T$ \emph{and output set $Y$ of size at most $n$}, the statistical distance between the output distributions of $D$ with random and pseudorandom inputs is bounded by $\varepsilon$.
Dubrov and Ishai showed interesting applications of nb-PRGs, e.g., secure pseudorandomization of a certain kind of information-theoretically secure protocols without any restriction on computational complexity of the adversary's attack algorithm.

However, constructing secure nb-PRGs seems much more difficult than the case of the usual PRGs.
Indeed, to the authors' best knowledge, the only constructions of nb-PRGs proposed so far are ones in the original paper \cite{DI06}, which are based on certain less standard computational assumptions.
Hence it will be fruitful if we can give some results implying that \emph{any} usual PRG (with some parameter) is also an nb-PRG (with a possibly different parameter).
In fact, a straightforward implication has been mentioned in \cite{DI06}, but this is far from being efficient (i.e., to obtain nb-PRGs with reasonable security parameters, the original PRGs are required to have somewhat impractical security parameters).
In this paper, we try to establish a more efficient implication result.

\subsection{Our contributions, and organization of this paper}
\label{subsec:intro_contributions}

In Section \ref{sec:problem}, we propose a class of mathematical problems, which we refer to as \lq\lq Function Density Problems''.
Intuitively, this problem is to evaluate the possibility of close approximations of arbitrary functions by using some \lq\lq easily describable (or analyzable)'' functions.

Then we introduce motivating applications of Function Density Problems to two topics in information security.
First, in Section \ref{sec:hash_functions}, we discuss theoretical analysis of collision resistance of keyless hash algorithms.
We give an abstract framework for attacking a given hash algorithm by using known attacks on some other \lq\lq easily breakable'' hash algorithms.
In the framework, it is essential to evaluate how closely a target hash algorithm can be approximated by \lq\lq easily breakable'' hash algorithms; thus Function Density Problems play a significant role in the security evaluation of hash algorithms.

Secondly, in Section \ref{sec:nb-PRG}, we study an enhanced security notion for PRGs (called nb-PRG) introduced by Dubrov and Ishai \cite{DI06}.
We give some implication results showing that any secure PRG with some parameter is also a secure nb-PRG with somewhat modified security parameter.
In the results, the overheads in the bounds of (time) complexity and of advantages for the distinguishers are in trade-off relations, and Function Density Problems can be applied to evaluate to what extent the trade-off will be improved by our proposed result.

Then, in order to arise some image or intuition of how Function Density Problems can be mathematically studied, in Section \ref{sec:example_FDP} we give some concrete examples of mathematical discussions on Function Density Problems themselves, using combinatorial and geometric arguments and techniques in Gr\"{o}bner bases.
In particular, we deal with special cases where the set of \lq\lq easily describable (or analyzable)'' functions forms a linear subspace (related to low-degree boolean functions, perfect linear codes and Reed--Solomon codes), which would be of independent interest from mathematical viewpoints.

Finally, in Section \ref{sec:conclusion} we give a concluding remark, which includes discussions on further possible applications of Function Density Problems in information security, and on possible directions of future research on Function Density Problems themselves.

\section{Function Density Problems}
\label{sec:problem}

In this section, we specify a class of mathematical problems, which we call \emph{Function Density Problems} (FDPs) in this paper.
As the class of FDPs in a most general form will include too various problems to obtain meaningful insights for their properties, it is significant to restrict the class suitably according to each situation under consideration.
Relations of FDPs to some concrete topics in cryptography will be shown in the following sections.

We give a general description of our problem:
\begin{definition}
[Function Density Problems]
\label{defn:function_density_problem}
Let $\mathcal{C}$ be a set of some functions, and let $\mathcal{C}'$ be a subset of $\mathcal{C}$.
Let $d(\cdot,\cdot)$ be a distance function for the pairs of functions in $\mathcal{C}$.
In this setting, we define a \emph{Function Density Problem} to be a problem of estimating the following quantity:
\begin{equation}
\label{eq:function_density_problem__definition}
r(\mathcal{C},\mathcal{C}') := \sup\{ d(f,\mathcal{C}') \mid f \in \mathcal{C}\} \enspace,
\end{equation}
where, for each $f \in \mathcal{C}$, $d(f,\mathcal{C}') := \inf\{ d(f,g) \mid g \in \mathcal{C}'\}$ is the distance from $f$ to $\mathcal{C}'$.
(The symbol \lq r' stands for \lq\lq radius'', by an analogy as if $\mathcal{C}'$ is a single central point in the figure $\mathcal{C}$, in which case the $r$ is the radius of $\mathcal{C}$ in usual sense.)
\end{definition}
Among very various situations covered by Definition \ref{defn:function_density_problem} (where $\mathcal{C}$ in fact need \emph{not} even to be a set of functions), in the applications of FDPs discussed in this paper we will focus on the following typical cases:
\begin{definition}
[Function Density Problems -- typical cases]
\label{defn:function_density_problem_special}
Let $\mathcal{C}$ be the set of all functions $f \colon X \to Y$ from a given finite set $X$ to a given finite set $Y$.
Let $\mathcal{C}' \subset \mathcal{C}$.
For any $f,g \in \mathcal{C}$, we define the distance between $f$ and $g$ by
\begin{equation}
\label{eq:definition_distance}
d_{\mathrm{H}}(f,g) := |\{x \in X \mid f(x) \neq g(x)\}| \enspace.
\end{equation}
In this setting, a \emph{Function Density Problem} is a problem of estimating the quantity $r(\mathcal{C},\mathcal{C}')$ defined by \eqref{eq:function_density_problem__definition} with $d(\cdot,\cdot) = d_{\mathrm{H}}(\cdot,\cdot)$.
\end{definition}
In the case of Definition \ref{defn:function_density_problem_special}, the \lq\lq $\sup$'' and \lq\lq $\inf$'' in Definition \ref{defn:function_density_problem} can be simply replaced with \lq\lq $\max$'' and \lq\lq $\min$'', respectively.
Moreover, the distance defined by \eqref{eq:definition_distance} coincides with the (generalized) Hamming distance when members of $\mathcal{C}$ are identified with sequences of length $|X|$ over the alphabet $Y$ in a natural manner.
Note that the quantity $r(\mathcal{C},\mathcal{C}')$ can be regarded as a special case of so-called Hausdorff distance for two subsets of a metric space, which would support that it is reasonable to consider $r(\mathcal{C},\mathcal{C}')$.

An intuitive explanation of a motivation for the above definition is as follows.
Given a set $\mathcal{C}$ of functions, a subset $\mathcal{C}'$ consists of members of $\mathcal{C}$ which are in some sense \lq\lq easily analyzable'' or \lq\lq with simple descriptions''.
The distance $d(f,g)$ measures how two functions $f$ and $g$ are similar.
Then the quantity $d(f,\mathcal{C}')$ evaluates how accurately a function $f \in \mathcal{C}$ can be approximated by an \lq\lq easy'' function in $\mathcal{C}'$, and the quantity $r(\mathcal{C},\mathcal{C}')$ evaluates how densely the \lq\lq easy'' functions distribute among the entire set $\mathcal{C}$.
In other words, when $r(\mathcal{C},\mathcal{C}')$ is revealed to be small, it shows potential availability of a close approximation of any member of $\mathcal{C}$ by an \lq\lq easy'' function in $\mathcal{C}'$.
For example, in the case of Definition \ref{defn:function_density_problem_special}, any function $f \in \mathcal{C}$ can in principle be converted into some function $g \in \mathcal{C}'$ by changing the values $f(x)$ for at most $r(\mathcal{C},\mathcal{C}')$ points $x \in X$.
(We emphasize that it does \emph{not} mean that a close approximation of $f$ by a function in $\mathcal{C}'$ can be \emph{efficiently computable}.
Such a difference between existence and efficient computability is also relevant to a preceding observation for \lq\lq human ignorance'' by Rogaway \cite{Rog06}.)

\section{Hash Functions and FDPs}
\label{sec:hash_functions}

In this section, we point out a relation of FDPs introduced in Section \ref{sec:problem} to security analysis of keyless hash functions.
Here we propose a new framework for theoretical security evaluation of keyless hash functions based on FDPs.
Although theoretical security evaluation of keyless hash functions is evidently an extremely difficult problem and our proposed framework is unfortunately not yet practical, we hope that our framework can be a clue to this problem.

We consider a keyless hash function $H \colon X \to Y$ with possibly large but finite domain $X$ and relatively small (finite) range $Y$.
Among the major security requirements for hash functions, we focus on the collision resistance of $H$; we discuss how it is difficult to find a collision pair $(x_1,x_2)$ for $H$ (recall that $(x_1,x_2)$ is called a collision pair for $H$ if we have $x_1,x_2 \in X$, $x_1 \neq x_2$ and $H(x_1) = H(x_2)$).
To show the relevance of FDPs to this problem, first we give a somewhat informal description of an abstract \lq\lq typical'' strategy for finding a collision pair:
\begin{enumerate}
\item \label{item:strategy_hash_function__approximation}
Construct a close approximation $H' \colon X \to Y$ of $H$ in such a way that collision pairs for $H'$ can be found with reasonable computational time.
\item \label{item:strategy_hash_function__easy_collision}
Find randomly a collision pair $(x'_1,x'_2)$ for $H'$.
\item Construct from $(x'_1,x'_2)$ a candidate $(x_1,x_2)$ of a collision pair for $H$ (in the simplest case, we just set $(x_1,x_2) = (x'_1,x'_2)$).
\item Check if $(x_1,x_2)$ is a collision pair of $H$; if it is indeed a collision pair of $H$, then output $(x_1,x_2)$ and stop the process.
\item If $(x_1,x_2)$ is not a collision pair of $H$, go back to Step \eqref{item:strategy_hash_function__easy_collision} and repeat the process.
\end{enumerate}
Intuitively, the number of iterations in the above strategy before finding a collision pair for $H$ would be expected to be small if the approximation $H'$ is sufficiently close to $H$ (see Lemma \ref{lem:success_probability_attack_hash} below for a quantitative expression of this expected tendency).
Hence security of a hash algorithm $H$ against such an attack strategy is related to the possibility of finding its close approximation.

More precisely, we set $(x_1,x_2) = (x'_1,x'_2)$ in the above strategy for simplicity.
We consider the case of Definition \ref{defn:function_density_problem_special}, and let $\mathcal{C}'$ be a subset of $\mathcal{C}$ with the property that any hash function $H'$ in $\mathcal{C}'$ admits an efficient attack (finding a collision pair) by a certain known attack strategy.
In the above attack strategy, the approximation $H'$ for $H$ specified in Step \eqref{item:strategy_hash_function__approximation} is supposed to be chosen from $\mathcal{C}'$.
Now we have the following lemma:
\begin{lemma}
\label{lem:success_probability_attack_hash}
Suppose that $H$ and $H'$ are functions $X \to Y$ with $|Y| = n \geq 2$, and $d_{\mathrm{H}}(H,H') = d$, $0 < d < |X|$.
Then the probability that a collision pair for $H'$, which is chosen uniformly at random from the set of all collision pairs for $H'$, is also a collision pair for $H$ is not lower than
\begin{equation}
\label{eq:lem_success_probability_attack_hash}
\frac{ 2 \alpha_0 |X| - n(\alpha_0+1)\alpha_0 - 2d\alpha_0 }{ 2 \alpha_0 |X| + 2d|X| - n(\alpha_0+1)\alpha_0 - 2d\alpha_0 - d^2 - d } \enspace,
\end{equation}
where $\alpha_0 = \lfloor (|X| - d - 1)/n \rfloor$.
Moreover, when $|X| \geq d + (n-1)^2$, the value in \eqref{eq:lem_success_probability_attack_hash} is getting larger as $d$ becomes smaller.
\end{lemma}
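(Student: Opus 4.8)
The plan is to split the argument into a purely combinatorial core (a worst‑case lower bound on the fraction of ``good'' collision pairs of $H'$) followed by a monotonicity check for the last sentence. Write $N := |X|$ and $m := N - d$, and let $S := \{x \in X \mid H(x) \neq H'(x)\}$, so $|S| = d$ and $|X \setminus S| = m$ by \eqref{eq:definition_distance}. (We may assume $H'$ has at least one collision pair, and note that if $m \le n$ then $\alpha_0 = 0$ and \eqref{eq:lem_success_probability_attack_hash} equals $0$, so the claim is trivial in that range.) Let $\Gamma$ be the set of collision pairs for $H'$ and let $\Gamma_0 \subseteq \Gamma$ consist of the pairs $(x_1,x_2)$ with $x_1, x_2 \in X \setminus S$. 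The key observation is that every pair in $\Gamma_0$ is automatically a collision pair for $H$: for $x \in X \setminus S$ we have $H(x) = H'(x)$, so $(x_1,x_2) \in \Gamma_0$ gives $H(x_1) = H'(x_1) = H'(x_2) = H(x_2)$. Hence the probability in question is at least $|\Gamma_0|/|\Gamma|$. Next I would bound $|\Gamma| = |\Gamma_0| + |\Gamma \setminus \Gamma_0| \le |\Gamma_0| + (2dN - d^2 - d)$, using that the pairs in $\Gamma \setminus \Gamma_0$ are (ordered) pairs of distinct elements of $X$ meeting $S$, of which there are $N(N-1) - m(m-1) = 2dN - d^2 - d$. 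Since $t \mapsto t/(t+c)$ is nondecreasing for fixed $c > 0$, and here $c = 2dN - d^2 - d = d(2N-d-1) > 0$ because $0 < d < N$, it remains to lower bound $|\Gamma_0|$ uniformly over all admissible configurations.

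If $F_1, \dots, F_k$ (with $k \le n$) are the nonempty fibers of $H'$ and $c_i := |F_i \setminus S|$, then $|\Gamma_0| = \sum_i c_i(c_i - 1)$ with $\sum_i c_i = m$; so it suffices to minimize $\sum_i c_i(c_i - 1) = \sum_i c_i^2 - m$ over nonnegative integers $c_1, \dots, c_n$ summing to $m$. This is the standard smoothing fact that $\sum_i c_i^2$ is minimized by the most balanced partition: if two parts differ by at least $2$, moving one unit from the larger part to the smaller one strictly decreases the sum of squares. Writing $m = nq + r$ with $0 \le r < n$, the minimum value is $nq(q-1) + 2rq$, and a short case distinction ($r = 0$ versus $r \ge 1$) identifies $\lfloor (m-1)/n \rfloor$ with $q-1$ or $q$ respectively and rewrites this value as $\alpha_0\,(2m - n(\alpha_0+1))$ with $\alpha_0 = \lfloor (N-d-1)/n \rfloor$. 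Substituting $m = N - d$ gives $|\Gamma_0| \ge 2\alpha_0 N - 2d\alpha_0 - n\alpha_0(\alpha_0+1)$, and inserting this together with $c = 2dN - d^2 - d$ into $|\Gamma_0|/(|\Gamma_0|+c)$ reproduces \eqref{eq:lem_success_probability_attack_hash} after routine algebra; this proves the first assertion. (Ordered versus unordered collision pairs is immaterial, since both the bound and the target expression scale by the same factor $2$.)

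For the last sentence, abbreviate the numerator of \eqref{eq:lem_success_probability_attack_hash} by $A(d)$ and set $c(d) := 2dN - d^2 - d$, so \eqref{eq:lem_success_probability_attack_hash} equals $V(d) := A(d)/(A(d)+c(d))$; I would show $V$ does not decrease as $d$ decreases. By the previous paragraph $A(d)$ is exactly the minimum number of collisions in a balanced partition of $m = N - d$ into at most $n$ parts, and this minimum is nondecreasing in $m$: from an optimal partition of $m+1$, deleting one element of a fullest part yields a partition of $m$ with no more collisions, so $A(d-1) \ge A(d)$. Also $c(d) - c(d-1) = 2(N-d) > 0$, so $c$ is strictly increasing on $1 \le d \le N-1$. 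For $2 \le d \le N-1$, clearing the positive denominators reduces $V(d) \le V(d-1)$ to $A(d)\,c(d-1) \le A(d-1)\,c(d)$ when $A(d-1) > 0$, which follows from $A(d) \le A(d-1)$ and $0 < c(d-1) \le c(d)$; and if $A(d-1) = 0$ then $m+1 \le n$, hence $A(d) = 0$ as well and both sides vanish. The hypothesis $N \ge d + (n-1)^2$ forces $m \ge (n-1)^2$, which (for the relevant parameter range) keeps $A(d)$, $A(d-1)$, $c(d-1)$, $c(d)$ all strictly positive and the two monotonicities strict, so that $V$ strictly increases as $d$ decreases.

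I expect the main obstacle to be bookkeeping rather than ideas: matching the transparent ``balanced partition'' quantity $nq(q-1)+2rq$ with the floor‑based expression in $\alpha_0$, and then verifying that $V(d)$ is literally the fraction \eqref{eq:lem_success_probability_attack_hash}, are routine but easy to mishandle. A secondary subtlety is that weak monotonicity in $d$ in fact holds with no hypothesis on $N$ at all; the condition $N \ge d + (n-1)^2$ is only a convenient sufficient condition guaranteeing that all quantities entering the cross‑multiplication stay strictly positive, so one should state precisely which inequalities it is invoked for.
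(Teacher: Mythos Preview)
Your argument is correct and arrives at exactly the same bound as the paper, but by a cleaner route. The paper sets $a_i=|H'^{-1}(y_i)|$, $b_i=|\{x:H(x)\neq H'(x)=y_i\}|$, writes the probability as at least $\varphi_2/\varphi_1$ with $\varphi_1=\sum_i a_i(a_i-1)$ and $\varphi_2=\sum_i(a_i-b_i)(a_i-b_i-1)$, and then \emph{minimizes the ratio} over all admissible $(\vec a,\vec b)$. This forces two nontrivial structural lemmas: at the minimum the $d$ bad points must all lie in a single fiber (Lemma~\ref{lem:proof_lemma__success_probability_attack_hash__Step_1}), and the remaining fibers must be balanced to within~$1$ (Lemma~\ref{lem:proof_lemma__success_probability_attack_hash__Step_2}); only then does the paper compute the extremal value and massage it (via Lemma~\ref{lem:inequality}) into \eqref{eq:lem_success_probability_attack_hash}. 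You instead decouple the two pieces: you upper bound $|\Gamma\setminus\Gamma_0|$ by the crude count $2dN-d^2-d$ of \emph{all} ordered pairs touching $S$, and separately lower bound $|\Gamma_0|$ by the standard balanced-partition minimum of $\sum_i c_i(c_i-1)$ with $\sum c_i=m$. The monotonicity of $t\mapsto t/(t+c)$ then combines the two bounds. This is strictly simpler---it avoids the ratio-minimization entirely and the two auxiliary lemmas---and it is a pleasant fact (which your algebra confirms) that the a priori looser decoupled bound lands on the very same expression. For the monotonicity clause your argument is again more conceptual than the paper's: the paper tracks $\Delta[\eta_i(d)]$ explicitly through a case split on $\Delta[\alpha_0]\in\{0,1\}$ and invokes the hypothesis $N\ge d+(n-1)^2$ inside that computation, whereas you read off $A(d-1)\ge A(d)$ directly from the partition interpretation and combine it with the trivial $c(d)>c(d-1)$. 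Your remark that weak monotonicity needs no hypothesis at all is correct; the paper's hypothesis is used only to make the inequality strict (and, as you note, to keep the relevant quantities positive so the cross-multiplication is legitimate).
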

A proof of Lemma \ref{lem:success_probability_attack_hash} will be provided in the last of this section.
Now let us imagine the following situation.
Two candidate sets $\mathcal{C}_1,\mathcal{C}_2$ for a new standard hash function are given, and we can specify subsets $\mathcal{C}'_1 \subset \mathcal{C}_1$ and $\mathcal{C}'_2 \subset \mathcal{C}_2$ in such a way that each $\mathcal{C}'_i$ ($i = 1,2$) consists of some hash functions for which collision pairs can be found in reasonable computational time by using some known techniques.
We suppose that $r(\mathcal{C}_1,\mathcal{C}'_1)$ is significantly small and $r(\mathcal{C}_2,\mathcal{C}'_2)$ is significantly large.
Then \emph{any} hash function $H$ chosen from $\mathcal{C}_1$ can be \emph{potentially} attacked by just finding a close approximation $H' \in \mathcal{C}'_1$ of $H$ (using some expert's sixth sense, for example) and applying the above attack strategy combined with known collision finding techniques.
On the other hand, $\mathcal{C}_2$ contains at least one hash function $H$ for which the above attack strategy combined with any known collision finding technique will not succeed.
This would suggest that it can be potentially safer to choose a new hash function from $\mathcal{C}_2$ rather than $\mathcal{C}_1$, as we already know the potential attack on any hash function in $\mathcal{C}_1$ but not the same for $\mathcal{C}_2$.

The authors hope that studies of FDPs can contribute to security analysis of keyless hash functions in the above manner, though how to specify the subset $\mathcal{C}'$ in practical cases is of course a big problem to be concerned.
One may also feel that it seems infeasible to compute the quantity $r(\mathcal{C},\mathcal{C}')$ for practical classes of hash functions; even if so, some estimate of a bound or tendency of $r(\mathcal{C},\mathcal{C}')$ would still give us an insight into the security level of those hash functions.
\begin{remark}
\label{rem:other_security}
Here we notice that, although we have focused on the collision resistance in the above argument, a similar idea would also be applicable to other security notions for keyless hash functions, such as the (second) preimage resistance.
\end{remark}

To conclude this section, we give a proof of Lemma \ref{lem:success_probability_attack_hash}.
\begin{proof}
[Proof of Lemma \ref{lem:success_probability_attack_hash}]
We write $(m)_2 := m(m-1)$ for any integer $m$.
Put $Y := \{y_1,\dots,y_n\}$, and for each $1 \leq i \leq n$, put
\begin{equation}
a_i := |\{x \in X \mid H'(x) = y_i\}| \,,\, b_i := |\{x \in X \mid H(x) \neq H'(x) = y_i\}| \enspace.
\end{equation}
Moreover, put
\begin{equation}
\varphi_1(\vec{a};\vec{b}) := \sum_{i=1}^{n} (a_i)_2 \,,\,
\varphi_2(\vec{a};\vec{b}) := \sum_{i=1}^{n} (a_i-b_i)_2 \enspace,
\end{equation}
where $\vec{a} := (a_1,\dots,a_n)$ and $\vec{b} := (b_1,\dots,b_n)$.
Then the number of collision pairs for $H'$ is $\varphi_1(\vec{a};\vec{b})$, while the number of collision pairs for $H$ is at least $\varphi_2(\vec{a};\vec{b})$.
Therefore the probability specified in the statement of Lemma \ref{lem:success_probability_attack_hash} is at least
\begin{equation}
\varphi(\vec{a};\vec{b}) := \frac{ \varphi_2(\vec{a};\vec{b}) }{ \varphi_1(\vec{a};\vec{b}) } \enspace.
\end{equation}
From now, we give a lower bound for the values of $\varphi$ under the following conditions implied by the definitions: $0 \leq b_i \leq a_i$ for each $i$, $\sum_{i=1}^{n} a_i = |X|$, and $\sum_{i=1}^{n} b_i = d$.
For the purpose, we show the following two lemmas:
\begin{lemma}
\label{lem:proof_lemma__success_probability_attack_hash__Step_1}
In the above setting, if the minimum value of the function $\varphi$ is attained by $\vec{a}$ and $\vec{b}$, then we have $b_i > 0$ for a unique index $i$, and $a_i - b_i \geq a_j$ for every index $j \neq i$.
\end{lemma}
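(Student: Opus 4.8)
The plan is a discrete ``exchange'' (smoothing) argument over the finite set of admissible pairs $(\vec a;\vec b)$, admissibility meaning $0 \le b_i \le a_i$ for all $i$, $\sum_i a_i = |X|$, and $\sum_i b_i = d$. Since this set is finite and nonempty, $\varphi$ attains a minimum, and I would fix a minimizer $(\vec a;\vec b)$ and show it has the asserted shape. It is convenient to write $c_i := a_i - b_i$, so that $\varphi_1 = \sum_i (a_i)_2$, $\varphi_2 = \sum_i (c_i)_2$, $\sum_i c_i = |X| - d$, and the admissibility constraints become $0 \le c_i \le a_i$; throughout I would use only the elementary identity $(m+1)_2 - (m)_2 = 2m$. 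One may assume $\varphi_2 > 0$: if $\varphi_2 = 0$ then $\min\varphi = 0$, which forces $|X| \le n + d$, and in that range the bound \eqref{eq:lem_success_probability_attack_hash} is $0$, so Lemma \ref{lem:success_probability_attack_hash} holds trivially.

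For the first assertion I would argue by contradiction: suppose $b_j > 0$ and $b_{j'} > 0$ for some $j \neq j'$, labelled so that $a_j \ge a_{j'}$. I would then transfer one unit \emph{simultaneously in both vectors} from bin $j'$ to bin $j$, i.e. replace $(a_{j'},b_{j'})$ by $(a_{j'}-1,b_{j'}-1)$ and $(a_j,b_j)$ by $(a_j+1,b_j+1)$; one checks this is admissible (using $b_{j'} \ge 1$ and $a_{j'} \ge b_{j'}$), leaves every $c_k$ unchanged, hence leaves $\varphi_2$ unchanged, while it raises $\varphi_1$ by $2(a_j - a_{j'} + 1) \ge 2$. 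Since $\varphi_2 > 0$, this strictly decreases $\varphi$, a contradiction. Hence at most one index carries positive $b$, and as $\sum_i b_i = d > 0$ exactly one does; call it $i$. Taking $j' = i$ in the same move also yields, at a minimizer, the stronger fact $a_i > a_j$ for every $j \neq i$, which I would record for later use.

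For the second assertion we now have $b_i = d$, $c_i = a_i - d$, and $c_j = a_j$ for $j \neq i$; suppose for contradiction $c_i < a_j$ for some $j \neq i$. If $d = 1$ this cannot happen, because $c_i = a_i - 1 \ge a_j$ by the strict maximality of $a_i$ just noted. So assume $d \ge 2$ and transfer a single unit of $b$ only, from bin $i$ to bin $j$ (admissible, since $b_i = d \ge 1$ and $a_j \ge c_i + 1 \ge 1$); this fixes $\varphi_1$ and changes $\varphi_2$ by $2(c_i - c_j + 1) = 2(c_i - a_j + 1) \le 0$. A strict decrease contradicts minimality outright; in the remaining case $c_i = a_j - 1$ the change is $0$, so the new pair is still a minimizer, yet it has $b$ positive at the two distinct indices $i$ (value $d-1 \ge 1$) and $j$ (value $1$), contradicting the first assertion applied to that minimizer. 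Hence $a_i - b_i = c_i \ge a_j$ for all $j \neq i$.

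The genuine content is all in the one-line identity $(m+1)_2 - (m)_2 = 2m$; the part I expect to be fiddly is the boundary bookkeeping — verifying admissibility of each exchange, and disposing of the equality case $2(c_i - a_j + 1) = 0$ in the second step (handled above by looping the perturbed configuration back through the first step) together with the separate, easy treatments of $d = 1$ and of $\varphi_2 = 0$.
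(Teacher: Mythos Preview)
Your proof is correct and follows essentially the same exchange argument as the paper: the same simultaneous $(a,b)$-transfer to establish uniqueness of the positive $b_i$, and the same $b$-only transfer for the second assertion, with the equality case disposed of by looping back through the first step. The only cosmetic difference is that you preemptively record the strict inequality $a_i > a_j$ (from the first exchange with $j'=i$) to kill the $d=1$ case directly, whereas the paper instead reaches $d=1$, $a_i=a_j$ in the equality subcase and then applies one further $a$-only exchange; your explicit treatment of the degenerate case $\varphi_2=0$ is also a point the paper leaves implicit.
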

\begin{proof}
If we have $i \neq j$ and $b_i,b_j > 0$, and we suppose $a_i \leq a_j$ by symmetry, then we have
\begin{equation}
\bigl( (a_i-1)_2 + (a_j+1)_2 \bigr) - \bigl( (a_i)_2 + (a_j)_2 \bigr)
= 2(a_j - a_i + 1) > 0 \enspace,
\end{equation}
therefore the value of $\varphi_1$ increases when $a_i$, $a_j$, $b_i$ and $b_j$ are replaced with $a_i-1$, $a_j+1$, $b_i-1$ and $b_j+1$, respectively.
On the other hand, the value of $\varphi_2$ is not changed by this replacement.
Therefore the value of $\varphi$ is decreased by this replacement, contradicting the assumption on the choice of $\vec{a}$ and $\vec{b}$.
Hence an index $i$ with $b_i > 0$ is unique, therefore $b_i = d$.
Similarly, if $j \neq i$ and $a_i - b_i < a_j$, then we have
\begin{equation}
\bigl( (a_i-b_i+1)_2 + (a_j-1)_2 \bigr) - \bigl( (a_i-b_i)_2 + (a_j)_2 \bigr)
= 2(a_i-b_i-a_j+1) \leq 0 \enspace,
\end{equation}
with equality holding when and only when $a_i - b_i = a_j - 1$.
This implies that the value of $\varphi$ at the $\vec{a}$ and $\vec{b}$ is larger than or equal to the value of $\varphi$ with $b_i$ and $b_j$ ($= 0$) being replaced with $b_i-1$ and $1$, respectively, where the equality holds if and only if $a_i - b_i = a_j - 1$.
As the former value is assumed to be the minimum, the equality condition $a_i - b_i = a_j - 1$ should hold.
Moreover, if $b_i - 1 > 0$, then the latter value of $\varphi$ (which is now equal to the former) cannot be the minimum by the above argument, which also leads to a contradiction.
Hence we have $b_i = 1$ (therefore $d = 1$) and $a_i = a_j$.
Now we have
\begin{equation}
\bigl( (a_i+1)_2 + (a_j-1)_2 \bigr) - \bigl( (a_i)_2 + (a_j)_2 \bigr)
= 2(a_i-a_j+1) > 0 \enspace.
\end{equation}
This implies that the value of $\varphi$ will decrease when $a_i$ and $a_j$ are replaced with $a_i + 1$ and $a_j - 1$, respectively, contradicting the assumption that the former value is the minimum.
Hence we have $a_i - b_i \geq a_j$ for every $j \neq i$, concluding the proof of Lemma \ref{lem:proof_lemma__success_probability_attack_hash__Step_1}.
\end{proof}
\begin{lemma}
\label{lem:proof_lemma__success_probability_attack_hash__Step_2}
In the above setting, if the minimum of the function $\varphi$ is attained by $\vec{a}$ and $\vec{b}$, then we have $|a_i - a_j| \leq 1$ for any pair of indices $i \neq j$ satisfying $b_i = b_j = 0$.
\end{lemma}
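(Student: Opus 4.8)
The plan is to run a single exchange (\lq\lq balancing'') perturbation and derive a contradiction from minimality. Suppose $(\vec a;\vec b)$ attains the minimum of $\varphi$ and, contrary to the claim, there are indices $i \neq j$ with $b_i = b_j = 0$ and $|a_i - a_j| \geq 2$; by symmetry (the whole setup is invariant under permuting indices) assume $a_i \geq a_j + 2$. I would replace $a_i$ by $a_i - 1$ and $a_j$ by $a_j + 1$, keeping $\vec b$ unchanged. First one checks feasibility: $\sum_k a_k = |X|$ and $\sum_k b_k = d$ are preserved, and $0 \leq b_i = 0 \leq a_i - 1$ (since $a_i - 1 \geq a_j + 1 \geq 1$) and $0 \leq b_j = 0 \leq a_j + 1$, so the new tuple still lies in the admissible region.

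Next I compute the effect on $\varphi_1$ and $\varphi_2$. Using $(m)_2 = m(m-1)$ one obtains $\bigl( (a_i-1)_2 + (a_j+1)_2 \bigr) - \bigl( (a_i)_2 + (a_j)_2 \bigr) = 2(a_j - a_i + 1) = -\delta$, where $\delta := 2(a_i - a_j - 1) \geq 2 > 0$, so $\varphi_1$ drops by exactly $\delta$. The key observation is that, because $b_i = b_j = 0$, the quantities $a_i - b_i$ and $a_j - b_j$ entering $\varphi_2$ change in exactly the same way as $a_i$ and $a_j$, while all other summands of $\varphi_2$ are untouched; hence $\varphi_2$ also drops by exactly the same amount $\delta$. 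Writing $\varphi_1', \varphi_2'$ for the perturbed values, we have $\varphi_1' = \varphi_1 - \delta$ and $\varphi_2' = \varphi_2 - \delta$.

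Finally I compare the two ratios. Since $a_k - b_k \leq a_k$ for all $k$ and $m \mapsto (m)_2$ is nondecreasing on the nonnegative integers, we always have $\varphi_2 \leq \varphi_1$; moreover at a genuine minimizer $\varphi_1 > \varphi_2$ strictly. This last point I would justify via Lemma \ref{lem:proof_lemma__success_probability_attack_hash__Step_1}: the unique index carrying $b_i > 0$ has $b_i = d \geq 1$, which makes its contribution to $\varphi_2$ strictly smaller than to $\varphi_1$ unless $a_i = 1$ (forcing $d = 1$), and that degenerate configuration cannot be the minimizer, since concentrating $\vec b$ on a largest block of $H'$ already gives $\varphi < 1$. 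Granting $\varphi_1 > \varphi_2$ (and $\varphi_1 > 0$, i.e.\ that $H'$ has at least one collision pair, so that $\varphi$ is defined), from the identities $1 - \varphi = (\varphi_1 - \varphi_2)/\varphi_1$ and $1 - \varphi' = (\varphi_1 - \varphi_2)/(\varphi_1 - \delta)$ together with $0 < \varphi_1 - \delta < \varphi_1$ we get $\varphi' < \varphi$, contradicting minimality. Hence no such pair $i \neq j$ exists, which is the assertion.

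The main obstacle is not the (routine) perturbation bookkeeping but this ratio comparison: an equal additive decrease of numerator and denominator lowers the ratio only when it is strictly below $1$, so the real work is to certify $\varphi_1 > \varphi_2$ at every minimizer, i.e.\ to invoke Lemma \ref{lem:proof_lemma__success_probability_attack_hash__Step_1} and to dispose of the degenerate cases ($H'$ injective, or $d = 1$ with all blocks of size $\le 1$, or $\varphi_1 = 0$) in which the quantity being minimized is itself ill-behaved or trivially equal to $1$.
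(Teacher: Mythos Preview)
Your argument is correct and is essentially the paper's own proof: the same exchange perturbation $a_i \mapsto a_i-1$, $a_j \mapsto a_j+1$, the same computation that $\varphi_1$ and $\varphi_2$ both drop by $\delta = 2(a_i - a_j - 1) > 0$, and the same ratio comparison (which the paper packages as Lemma~\ref{lem:inequality}). The one place you work harder than necessary is in certifying the strict inequality $\varphi_1 > \varphi_2$: the paper invokes the \emph{second} conclusion of Lemma~\ref{lem:proof_lemma__success_probability_attack_hash__Step_1}, namely $a_{i'} - b_{i'} \geq a_k$ for all $k \neq i'$, which applied with $k = i$ yields $a_{i'} \geq b_{i'} + a_i \geq 1 + 2$, so $(a_{i'})_2 > (a_{i'}-b_{i'})_2$ directly and your separate treatment of the degenerate case $a_{i'}=1$ becomes unnecessary.
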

\begin{proof}
Assume contrary that $a_i - a_j \geq 2$ for such a pair of indices $i \neq j$.
For $\ell \in \{1,2\}$, let $\alpha_{\ell}$ denote the value of $\varphi_{\ell}$ at the $\vec{a}$ and $\vec{b}$, and let $\beta_{\ell}$ denote the value of $\varphi_{\ell}$ with $a_i$ and $a_j$ being replaced with $a_i-1$ and $a_j+1$, respectively.
Then we have $\beta_1 - \alpha_1 = \beta_2 - \alpha_2 = 2(a_j-a_i+1) < 0$.
On the other hand, for the unique index $i'$ with $b_{i'} > 0$ (see Lemma \ref{lem:proof_lemma__success_probability_attack_hash__Step_1}), we have $a_{i'} \geq b_{i'} + a_i \geq b_{i'} + a_j + 2 \geq 2$ by the assumption and Lemma \ref{lem:proof_lemma__success_probability_attack_hash__Step_1}, therefore $\alpha_1 > \alpha_2$.
Now we present the following lemma, which is proven by an easy calculation:
\begin{lemma}
\label{lem:inequality}
If $p > q \geq 0$ and $r > 0$, then $q/p < (q+r)/(p+r)$.
\end{lemma}
By using this lemma, we have
\begin{equation}
\frac{ \alpha_2 }{ \alpha_1 } = \frac{ \beta_2 - 2(a_j-a_i+1) }{ \beta_1 - 2(a_j-a_i+1) } > \frac{ \beta_2 }{ \beta_1 } \enspace,
\end{equation}
contradicting the assumption that $\alpha_2/\alpha_1$ is the minimum of the value of $\varphi$.
Hence Lemma \ref{lem:proof_lemma__success_probability_attack_hash__Step_2} holds.
\end{proof}

By Lemma \ref{lem:proof_lemma__success_probability_attack_hash__Step_1} and Lemma \ref{lem:proof_lemma__success_probability_attack_hash__Step_2}, the points $\vec{a}$ and $\vec{b}$ that attain the minimum of $\varphi$ satisfy the following conditions: $b_i > 0$ for a unique $i$, and there is an integer $\alpha$ satisfying that $a_i - b_i \geq \alpha + 1$ and $a_j \in \{\alpha,\alpha + 1\}$ for every $j \neq i$.
Note that this $\alpha$ can be taken as $\alpha \geq 0$; indeed, this is obvious if some $a_j$ with $j \neq i$ is positive, while the remaining possibility that $a_j = 0$ for every $j \neq i$ allows us to choose $\alpha = 0$ as $a_i = |X| > d = b_i$ and $a_i - b_i \geq 1$.
Let $k$ be the number of indices $j \neq i$ with $a_j = \alpha + 1$, therefore $0 \leq k \leq n-1$.
Then we have $a_i = |X| - (n - 1) \alpha - k$, while $b_i = d$, therefore the condition $a_i - b_i \geq \alpha + 1$ implies that $k \leq |X| - n\alpha - d - 1$.
Now we write the values of $\varphi_1$ and $\varphi_2$ in this case as $\varphi_1(\alpha,k)$ and $\varphi_2(\alpha,k)$, respectively.
Then we have
\begin{equation}
\begin{split}
\varphi_1(\alpha,k) &= k(\alpha+1)_2 + (n-1-k)(\alpha)_2 + (a_i)_2 \enspace, \\
\varphi_2(\alpha,k) &= k(\alpha+1)_2 + (n-1-k)(\alpha)_2 + (a_i-d)_2 \enspace,
\end{split}
\end{equation}
therefore $\varphi_1(\alpha,k) - \varphi_2(\alpha,k) = 2da_i - d^2 - d$.
Now by Lemma \ref{lem:inequality}, we have
\begin{equation}
\begin{split}
1 - \frac{ \varphi_2(\alpha,k) }{ \varphi_1(\alpha,k) }
= \frac{ 2da_i - d^2 - d }{ \varphi_1(\alpha,k) }
\leq \frac{ 2da_i - d^2 - d + 2d((n-1)\alpha + k) }{ \varphi_1(\alpha,k) + 2d((n-1)\alpha + k) } \\
= \frac{ 2d|X| - d^2 - d }{ k(\alpha+1)_2 + (n-1-k)(\alpha)_2 + (a_i)_2 + 2d(n-1)\alpha + 2dk }
\end{split}
\end{equation}
(note that $2d((n-1)\alpha + k) \geq 0$ as $\alpha \geq 0$).
Let $\psi(\alpha,k)$ denote the denominator of the right-hand side.
Then, by virtue of the property $\frac{\partial}{\partial k} a_i = -1$, we have
\begin{equation}
\frac{\partial}{\partial k} \psi(\alpha,k)
= (\alpha+1)_2 - (\alpha)_2 -(2a_i - 1) + 2d
= 2\alpha - 2a_i + 1 + 2d < 0
\end{equation}
(note that $a_i - d \geq \alpha + 1$), therefore $\psi(\alpha,k)$ is decreasing as $k$ is increasing.
On the other hand, we have $\psi(\alpha,n-1) = \psi(\alpha+1,0)$.
Now note that $\alpha \leq (|X| - d - 1)/n$ as $0 \leq k \leq |X| - n\alpha - d - 1$.
This implies that $\psi(\alpha,k)$ takes the minimum value at $\alpha = \lfloor (|X| - d - 1)/n \rfloor = \alpha_0$ and $k = k_0 := |X| - n\alpha_0 - d - 1$ (note that $k_0 \leq n-1$).
Moreover, we have $a_i = \alpha_0 + d + 1$ if $\alpha = \alpha_0$ and $k = k_0$.
Hence a straightforward calculation shows that
\begin{equation}
\begin{split}
1 - \frac{ \varphi_2(\alpha,k) }{ \varphi_1(\alpha,k) }
&\leq \frac{ 2d|X| - d^2 - d }{ \psi(\alpha_0,k_0) } \\
&= \frac{ 2d|X| - d^2 - d }{ 2 \alpha_0 |X| + 2d|X| - n(\alpha_0+1)\alpha_0 - 2d\alpha_0 - d^2 - d } \enspace,
\end{split}
\end{equation}
therefore
\begin{equation}
\begin{split}
\frac{ \varphi_2(\alpha,k) }{ \varphi_1(\alpha,k) }
&\geq 1 - \frac{ 2d|X| - d^2 - d }{ 2 \alpha_0 |X| + 2d|X| - n(\alpha_0+1)\alpha_0 - 2d\alpha_0 - d^2 - d } \\
&= \frac{ 2 \alpha_0 |X| - n(\alpha_0+1)\alpha_0 - 2d\alpha_0 }{ 2 \alpha_0 |X| + 2d|X| - n(\alpha_0+1)\alpha_0 - 2d\alpha_0 - d^2 - d } \enspace,
\end{split}
\end{equation}
which proves the lower bound \eqref{eq:lem_success_probability_attack_hash} in the statement of Lemma \ref{lem:success_probability_attack_hash}.

Finally, suppose that $d \geq 2$, and let $\eta_1(d)$ and $\eta_2(d)$ denote the denominator and the numerator in \eqref{eq:lem_success_probability_attack_hash}, respectively.
For any value $x$ depending on $d$, let $\Delta[x]$ temporarily denote the value of $x$ at $d-1$ minus the value of $x$ at $d$.
Then we have $\Delta(-d^2-d) = 2d$, therefore
\begin{equation}
\begin{split}
\Delta[\eta_2(d)] &= \Delta[2\alpha_0|X| - n(\alpha_0+1)\alpha_0 - 2d\alpha_0] \enspace, \\
\Delta[\eta_1(d)] &= \Delta[2\alpha_0|X| - n(\alpha_0+1)\alpha_0 - 2d\alpha_0] - 2|X| + 2d < \Delta[\eta_2(d)] \enspace.
\end{split}
\end{equation}
Moreover, we have $\Delta[\alpha_0] \in \{0,1\}$, and if $\Delta[\alpha_0] = 0$, then $\Delta[\eta_2(d)] = 2d\alpha_0 > 0$.
On the other hand, if $\Delta[\alpha_0] = 1$, then we have
\begin{equation}
\begin{split}
\Delta[(\alpha_0+1)\alpha_0] &= (\alpha_0+2)(\alpha_0+1)-(\alpha_0+1)\alpha_0 = 2(\alpha_0+1) \enspace, \\
\Delta[2d\alpha_0] &= 2(d-1)(\alpha_0+1) - 2d\alpha_0 = 2d - 2\alpha_0 - 2 \enspace,
\end{split}
\end{equation}
therefore
\begin{equation}
\begin{split}
\Delta[\eta_2(d)]
&= 2|X| - 2n(\alpha_0+1) - 2d + 2\alpha_0 + 2 \\
&= 2|X| - 2(n-1)\alpha_0 - 2n - 2d + 2 \\
&\geq 2|X| - 2(n-1) \frac{ |X| - d - 1 }{n} - 2n - 2d + 2 \\
&= \frac{ 2 }{ n } \bigl( |X| - d + 2n-1 - n^2 \bigr) \geq 0
\end{split}
\end{equation}
(where we used the assumption $|X| \geq d + (n-1)^2$).
Now by Lemma \ref{lem:inequality}, we have
\begin{equation}
\frac{ \eta_2(d-1) }{ \eta_1(d-1) }
= \frac{ \eta_2(d) + \Delta[\eta_2(d)] }{ \eta_1(d) + \Delta[\eta_1(d)] }
\geq \frac{ \eta_2(d) }{ \eta_1(d) + \Delta[\eta_1(d)] - \Delta[\eta_2(d)] }
> \frac{ \eta_2(d) }{ \eta_1(d) } \enspace.
\end{equation}
Hence the proof of Lemma \ref{lem:success_probability_attack_hash} is concluded.
\end{proof}

\section{PRGs and FDPs}
\label{sec:nb-PRG}

As our second application of FDPs, in this section we present some results which prove that any (computationally indistinguishable) PRG with some parameter is also an nb-PRG with a (possibly different) specified parameter.
The concrete relations between parameters for an algorithm as a PRG and as an nb-PRG, respectively, will be determined by applying FDPs.

First we recall the security notion for PRGs.
We emphasize that, for the sake of simplicity, here we adopt definitions in forms of concrete security rather than asymptotic security.
Let $U_X$ denote the uniform probability distribution over a finite set $X$.
\begin{definition}
[{see e.g., \cite{FSS07}}]
\label{defn:PRG_security}
Let $G \colon S \to X$ be an algorithm with finite input set $S$ and finite output set $X$.
Given parameters $T \geq 0$ and $\varepsilon \geq 0$, $G$ is called a \emph{$(T,\varepsilon)$-secure pseudorandom generator} (\emph{PRG}) if, for any algorithm (called a \emph{distinguisher}) $D \colon X \to \{0,1\}$ with time complexity bounded by $T$, we have $\mathsf{Adv}_D(G) \leq \varepsilon$ where $\mathsf{Adv}_D(G)$ denotes the \emph{advantage} of $D$ defined by
\begin{equation}
\mathsf{Adv}_D(G) := | Pr[D(U_X) = 1] - Pr[D(G(U_S)) = 1] | \enspace.
\end{equation}
\end{definition}

Let $\Delta(P_1,P_2)$ denote the statistical distance of two probability distributions $P_1,P_2$ over the same finite set $Z$ defined by
\begin{eqnarray}
\label{eq:statistical_distance_1}
\Delta(P_1,P_2) &:=& \frac{1}{2} \sum_{z \in Z} | Pr[P_1 = z] - Pr[P_2 = z] | \\
\label{eq:statistical_distance_2}
&=& \max_{E \subset Z} | Pr[P_1 \in E] - Pr[P_2 \in E] | \enspace.
\end{eqnarray}
Then the advantage $\mathsf{Adv}_D(G)$ of a distinguisher $D$ defined above is equal to $\Delta(D(U_X),D(G(U_S)))$, as both $D(U_X)$ and $D(G(U_S))$ are probability distributions over $\{0,1\}$.
This interpretation of the advantage gives us a motivation to enhance the above security notion of PRGs, as in the following definition introduced by Dubrov and Ishai \cite{DI06} (with slightly different formulation):
\begin{definition}
[{\cite{DI06}}]
\label{defn:nb-PRG_security}
Let $G \colon S \to X$ be an algorithm with finite input set $S$ and finite output set $X$.
Given parameters $T \geq 0$, $\varepsilon \geq 0$ and an integer $n \geq 2$, $G$ is called \emph{$(T,n,\varepsilon)$-secure} if, for any algorithm (distinguisher) $D \colon X \to \{0,1,\dots,n-1\}$ with time complexity bounded by $T$, we have $\mathsf{Adv}_D(G) \leq \varepsilon$ where we put $\mathsf{Adv}_D(G) := \Delta(D(U_X),D(G(U_S)))$.
Such an algorithm $G$ is called a \emph{PRG that fools non-boolean distinguishers} (\emph{nb-PRG}, in short).
\end{definition}
Note that $(T,2,\varepsilon)$-security is equivalent to $(T,\varepsilon)$-security in Definition \ref{defn:PRG_security}.
Several applications of nb-PRGs are discussed in \cite{DI06}.
For example, it was shown that randomness used in some kinds of \emph{information-theoretically secure} protocols (such as multi-party computation of certain types) can be replaced with outputs of nb-PRGs, without any restriction on computational complexity of the adversary against the protocol.
However, despite the significance of nb-PRGs mentioned above, it seems much more difficult to construct secure nb-PRGs than the case of usual PRGs against $1$-bit output distinguishers.
Indeed, to the authors' best knowledge, the only constructions of nb-PRGs in the literature so far are the ones by Dubrov and Ishai themselves in the original paper \cite{DI06}, and their construction is based on certain computational assumption which is less standard than those used in constructions of usual PRGs.
Hence, it is worthy to investigate a method to construct nb-PRGs (under standard computational assumptions).

Our proposal here is to establish a general theorem of the following form: Any $(T',\varepsilon')$-secure PRG is also a $(T,n,\varepsilon)$-secure nb-PRG, where the parameters $T'$ and $\varepsilon'$ as a usual PRG are determined by $T$, $n$ and $\varepsilon$ in a certain manner.
Such an implication result is evidently meaningful, as it enables us to convert a large number of existing PRGs under standard assumptions into nb-PRGs.
In fact, an implication relation as above has been mentioned (without proof) in \cite{DI06}.
Our aim here is to improve the preceding relation by introducing the idea of FDPs.

The above-mentioned relation is derived from the first expression \eqref{eq:statistical_distance_1} of statistical distance, in the following manner (which refers to a description in \cite{NH13}).
We introduce some notations.
Put $Y := \{0,1,\dots,n-1\}$ for simplicity.
For any subset $Z \subset Y$, let $\chi_Z \colon Y \to \{0,1\}$ denote the characteristic function of $Z$ defined by $\chi_Z(x) = 1$ if $x \in Z$ and $\chi_Z(x) = 0$ if $x \in Y \setminus Z$.
We write $\chi_z = \chi_{\{z\}}$ for simplicity when $Z = \{z\}$.
In this setting, for any PRG $G \colon S \to X$ and any non-boolean distinguisher $D \colon X \to Y$, the statistical distance $\Delta(D(U_X),D(G(U_S)))$ is equal to
\begin{equation}
\begin{split}
&\frac{1}{2} \sum_{y \in Y} | Pr[D(U_X) = y] - Pr[D(G(U_S)) = y] | \\
&= \frac{1}{2} \sum_{y \in Y} | Pr[\chi_y \circ D(U_X) = 1] - Pr[\chi_y \circ D(G(U_S)) = 1] | \\
&= \frac{1}{2} \sum_{y \in Y} \mathsf{Adv}_{\chi_y \circ D}(G) \enspace,
\end{split}
\end{equation}
where $\chi_y \circ D$ denotes an algorithm performed by first executing the distinguisher $D$ and then evaluating the output of $D$ by the function $\chi_y$.
An important property is that $\chi_y \circ D$ is a $1$-bit output algorithm, therefore it can be regarded as a distinguisher for the PRG $G$.
This implies that, to show that a $(T',\varepsilon')$-secure PRG $G$ is also a $(T,n,\varepsilon)$-secure nb-PRG, it suffices to choose the parameters as $T' = T + \delta_1$ and $\varepsilon' = 2 \varepsilon / n$, where $\delta_1$ is the maximum of the overhead in computational complexity of composing some $\chi_y$ ($y \in Y$) to $D$ (usually, $\delta_1$ can be set to be almost zero in practical situations).
In other words, we have the following proposition (which has been mentioned in \cite{DI06}):
\begin{proposition}
\label{prop:nb-PRG_implication_1}
In this setting, any $(T + \delta_1,2 \varepsilon / n)$-secure PRG is also $(T,n,\varepsilon)$-secure, where the quantity $\delta_1$ is defined in the above manner.
\end{proposition}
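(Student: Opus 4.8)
The plan is straightforward, since most of the work has already been laid out in the discussion preceding the proposition; I would simply make that discussion precise. Fix an arbitrary non-boolean distinguisher $D \colon X \to Y$, where $Y = \{0,1,\dots,n-1\}$, whose time complexity is bounded by $T$, and aim to show $\mathsf{Adv}_D(G) \leq \varepsilon$ under the assumption that $G$ is a $(T+\delta_1, 2\varepsilon/n)$-secure PRG. I would begin from the first expression \eqref{eq:statistical_distance_1} for the statistical distance and rewrite $\mathsf{Adv}_D(G) = \Delta(D(U_X),D(G(U_S)))$ as $\frac{1}{2}\sum_{y \in Y} \mathsf{Adv}_{\chi_y \circ D}(G)$, exactly via the displayed computation above (the key identity being $Pr[D(P)=y] = Pr[\chi_y \circ D(P) = 1]$ for each distribution $P$ on $X$, applied to $P = U_X$ and $P = G(U_S)$).

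Next I would argue that each $\chi_y \circ D$ is a bona fide $1$-bit-output distinguisher for $G$: it executes $D$ and then tests the output by the single-bit function $\chi_y$, so its time complexity is at most $T + \delta_1$ by the definition of $\delta_1$ as the maximum over $y \in Y$ of the overhead of composing $\chi_y$ onto $D$. Applying Definition \ref{defn:PRG_security} with the hypothesis that $G$ is $(T+\delta_1, 2\varepsilon/n)$-secure then gives $\mathsf{Adv}_{\chi_y \circ D}(G) \leq 2\varepsilon/n$ for every $y \in Y$. Summing over the $n$ values $y \in Y$ and dividing by $2$ yields $\mathsf{Adv}_D(G) \leq \frac{1}{2}\cdot n \cdot \frac{2\varepsilon}{n} = \varepsilon$. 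Since $D$ was an arbitrary distinguisher of complexity at most $T$ with output set of size at most $n$, this establishes that $G$ is $(T,n,\varepsilon)$-secure.

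I do not anticipate any real obstacle: the argument is just the observation that the $L^1$-type definition of statistical distance decomposes the non-boolean advantage into a sum of $n$ ordinary one-bit advantages, each of which is controlled by the PRG security of $G$. The only point deserving attention is the complexity accounting — ensuring that the overhead $\delta_1$ is chosen uniformly in $y$ so that all $n$ composed distinguishers $\chi_y \circ D$ simultaneously fall within the complexity budget $T + \delta_1$ — and this is already handled by the way $\delta_1$ was defined. One might also note, although it is not needed for the statement, that if $D$'s output set has size strictly smaller than $n$ the same bound holds a fortiori, since the sum then contains fewer than $n$ nonzero terms.
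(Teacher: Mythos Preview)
Your proposal is correct and follows exactly the argument the paper gives in the paragraph immediately preceding the proposition: decompose the non-boolean advantage via \eqref{eq:statistical_distance_1} into $\frac{1}{2}\sum_{y\in Y}\mathsf{Adv}_{\chi_y\circ D}(G)$, bound each summand by $2\varepsilon/n$ using $(T+\delta_1,2\varepsilon/n)$-security, and sum. The paper in fact treats this computation as the proof and states the proposition only as a summary (attributing it to \cite{DI06}), so there is nothing to add.
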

A drawback of this result is that, in practical applications the parameter $n$ (which is relevant to the allowable input size for an adversary against a protocol under consideration) should frequently be large, which makes the overhead in a bound of advantage in Proposition \ref{prop:nb-PRG_implication_1} too heavy.
We try to resolve the drawback by improving or modifying the above result.

Our first idea is to use the second expression \eqref{eq:statistical_distance_2} of statistical distance instead of the first one \eqref{eq:statistical_distance_1} used in the preceding argument.
Namely, in the same setting as above, the statistical distance $\Delta(D(U_X),D(G(U_S)))$ is equal to
\begin{equation}
\begin{split}
&\max_{Z \subset Y} | Pr[D(U_X) \in Z] - Pr[D(G(U_S)) \in Z] | \\
&= \max_{Z \subset Y} | Pr[\chi_Z \circ D(U_X) = 1] - Pr[\chi_Z \circ D(G(U_S)) = 1] | \\
&= \max_{Z \subset Y} \mathsf{Adv}_{\chi_Z \circ D}(G) \enspace.
\end{split}
\end{equation}
In the same way as Proposition \ref{prop:nb-PRG_implication_1}, the above argument implies the following result:
\begin{proposition}
\label{prop:nb-PRG_implication_2}
In this setting, any $(T + \delta_2,\varepsilon)$-secure PRG is also $(T,n,\varepsilon)$-secure, where $\delta_2$ is the maximum of the overhead in computational complexity of composing some $\chi_Z$ with $Z \subset Y := \{0,1,\dots,n-1\}$ to $D$.
\end{proposition}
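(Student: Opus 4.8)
The plan is to argue directly from the second expression \eqref{eq:statistical_distance_2} of the statistical distance, exactly paralleling the derivation of Proposition \ref{prop:nb-PRG_implication_1} but using a single characteristic function $\chi_Z$ of an optimal subset $Z \subset Y$ rather than summing over singletons. First I would fix a $(T + \delta_2,\varepsilon)$-secure PRG $G \colon S \to X$ and an arbitrary non-boolean distinguisher $D \colon X \to Y$ with time complexity at most $T$, where $Y = \{0,1,\dots,n-1\}$; the goal is to bound $\mathsf{Adv}_D(G) = \Delta(D(U_X),D(G(U_S)))$ by $\varepsilon$.

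Next, applying the identity \eqref{eq:statistical_distance_2} to the two distributions $D(U_X)$ and $D(G(U_S))$ over the finite set $Y$, there is a subset $Z \subset Y$ attaining the maximum, so that $\Delta(D(U_X),D(G(U_S))) = |Pr[D(U_X) \in Z] - Pr[D(G(U_S)) \in Z]| = \mathsf{Adv}_{\chi_Z \circ D}(G)$, where the last equality rewrites membership in $Z$ as the event $\chi_Z \circ D = 1$, as in the display preceding the statement. The key observation is that $\chi_Z \circ D \colon X \to \{0,1\}$ is a $1$-bit output algorithm, hence a legitimate distinguisher for $G$ in the sense of Definition \ref{defn:PRG_security}, and its time complexity is at most $T + \delta_2$ by the very definition of $\delta_2$ as the worst-case overhead of post-composing some $\chi_Z$ ($Z \subset Y$) onto $D$.

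Then the $(T + \delta_2,\varepsilon)$-security of $G$ gives $\mathsf{Adv}_{\chi_Z \circ D}(G) \leq \varepsilon$, and combining this with the previous chain of equalities yields $\mathsf{Adv}_D(G) = \Delta(D(U_X),D(G(U_S))) \leq \varepsilon$. Since $D$ was an arbitrary distinguisher of time complexity at most $T$ with output set of size at most $n$, this shows that $G$ is $(T,n,\varepsilon)$-secure, completing the argument.

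There is essentially no hard step here: the result is a routine reformulation once the $\max$-form \eqref{eq:statistical_distance_2} of the statistical distance is available. The only points requiring a little care are that the maximum in \eqref{eq:statistical_distance_2} is indeed attained — immediate since $Y$ is finite, so there are only finitely many candidate sets $Z$ — and that the overhead incurred by composing $\chi_Z$ is uniformly bounded by $\delta_2$ over all $Z \subset Y$, which is exactly how $\delta_2$ is defined in the statement. (The substantive comparison, namely that this trades the factor $2/n$ loss of Proposition \ref{prop:nb-PRG_implication_1} for the typically larger overhead $\delta_2 \geq \delta_1$, is a matter of discussion rather than proof.)
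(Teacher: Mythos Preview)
Your proof is correct and follows essentially the same approach as the paper: the argument given in the paper immediately before the proposition uses the $\max$-form \eqref{eq:statistical_distance_2} of the statistical distance to rewrite $\mathsf{Adv}_D(G)$ as $\max_{Z \subset Y} \mathsf{Adv}_{\chi_Z \circ D}(G)$ and then bounds each term by $\varepsilon$ via the $(T+\delta_2,\varepsilon)$-security, exactly as you do.
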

In contrast to Proposition \ref{prop:nb-PRG_implication_1}, there exists no overhead for a bound of advantage $\varepsilon$ in Proposition \ref{prop:nb-PRG_implication_2}.
However, instead, the overhead $\delta_2$ for a bound of time complexity of distinguishers is expected to be too heavy, as the set $Y$ (of somewhat large size) may contain an extremely complicated subset $Z$, for which the computation of $\chi_Z$ would be inefficient.

From now, we try to improve the above-mentioned trade-off between overheads for bounds of advantage and of computational complexity, by applying the idea of FDPs.
Put $Y := \{0,1,\dots,n-1\}$ as above, and let $\mathcal{C}$ be the set of characteristic functions $\chi_Z \colon Y \to \{0,1\}$ for subsets $Z \subset Y$, and let $d = d_{\mathrm{H}}$ (see \eqref{eq:definition_distance}).
Then for $\chi_{Y_1},\chi_{Y_2} \in \mathcal{C}$, $d_{\mathrm{H}}(\chi_{Y_1},\chi_{Y_2})$ is equal to the size of the symmetric difference $Y_1 \ominus Y_2 := (Y_1 \setminus Y_2) \cup (Y_2 \setminus Y_1)$ of two subsets $Y_1$ and $Y_2$.
Now we fix a subset $\mathcal{C}'$ of $\mathcal{C}$.
Let $\delta_3$ be the maximum of the overhead in computational complexity of composing some $\chi_Z \in \mathcal{C}'$ to $D$.
Moreover, we put $r := r(\mathcal{C},\mathcal{C}')$ for simplicity.
Then we have the following result (we notice that, when $\mathcal{C}' = \{\chi_{\emptyset}\}$, the theorem gives almost the same result as Proposition \ref{prop:nb-PRG_implication_1}):
\begin{theorem}
\label{thm:nb-PRG_implication}
In the above situation, let $\delta_1$ be as specified in Proposition \ref{prop:nb-PRG_implication_1}.
If $G \colon S \to X$ is $(T + \delta_1,\varepsilon_1)$-secure and $(T + \delta_3,\varepsilon_3)$-secure, then $G$ is also $(T,n,r\varepsilon_1 + \varepsilon_3)$-secure.
\end{theorem}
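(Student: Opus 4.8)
The plan is to express the non-boolean advantage $\mathsf{Adv}_D(G) = \Delta(D(U_X), D(G(U_S)))$ via the second expression \eqref{eq:statistical_distance_2} of statistical distance, as in the argument preceding Proposition~\ref{prop:nb-PRG_implication_2}: there is some subset $Z^\star \subset Y$ achieving the maximum, so $\mathsf{Adv}_D(G) = \mathsf{Adv}_{\chi_{Z^\star} \circ D}(G)$. The difficulty with Proposition~\ref{prop:nb-PRG_implication_2} is that $\chi_{Z^\star}$ may be hard to compute; the idea is to replace $\chi_{Z^\star}$ by a \emph{nearby} function $\chi_W \in \mathcal{C}'$, which by definition of $r = r(\mathcal{C},\mathcal{C}')$ can be chosen with $d_{\mathrm{H}}(\chi_{Z^\star}, \chi_W) \leq r$, i.e.\ $|Z^\star \ominus W| \leq r$. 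The composition $\chi_W \circ D$ is then an honest $1$-bit distinguisher for $G$ with time complexity at most $T + \delta_3$, so $(T+\delta_3,\varepsilon_3)$-security of $G$ gives $\mathsf{Adv}_{\chi_W \circ D}(G) \leq \varepsilon_3$.

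Next I would bound the gap between $\mathsf{Adv}_{\chi_{Z^\star} \circ D}(G)$ and $\mathsf{Adv}_{\chi_W \circ D}(G)$. Writing $Z^\star \ominus W$ as a disjoint union of its at most $r$ singletons $\{y\}$, and using that $\chi_{Z^\star}$ and $\chi_W$ differ only on inputs in $Z^\star \ominus W$, a telescoping/triangle-inequality argument shows
\begin{equation}
\bigl| \mathsf{Adv}_{\chi_{Z^\star} \circ D}(G) - \mathsf{Adv}_{\chi_W \circ D}(G) \bigr| \leq \sum_{y \in Z^\star \ominus W} \mathsf{Adv}_{\chi_y \circ D}(G) \enspace.
\end{equation}
Each $\chi_y \circ D$ is a $1$-bit distinguisher for $G$ with complexity at most $T + \delta_1$ (this is exactly the overhead quantified in Proposition~\ref{prop:nb-PRG_implication_1}), so by $(T+\delta_1,\varepsilon_1)$-security each term is at most $\varepsilon_1$, and the sum has at most $r$ terms. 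Combining,
\begin{equation}
\mathsf{Adv}_D(G) = \mathsf{Adv}_{\chi_{Z^\star} \circ D}(G) \leq \mathsf{Adv}_{\chi_W \circ D}(G) + r\varepsilon_1 \leq r\varepsilon_1 + \varepsilon_3 \enspace,
\end{equation}
which is the claimed $(T,n,r\varepsilon_1+\varepsilon_3)$-security, since $D$ was an arbitrary non-boolean distinguisher of complexity at most $T$.

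The main obstacle I anticipate is the second step: carefully justifying the inequality $|\mathsf{Adv}_{\chi_{Z^\star}\circ D}(G) - \mathsf{Adv}_{\chi_W\circ D}(G)| \leq \sum_{y} \mathsf{Adv}_{\chi_y\circ D}(G)$. The point is that $\chi_{Z^\star} = \chi_W + \sum_{y \in Z^\star \setminus W} \chi_y - \sum_{y \in W \setminus Z^\star} \chi_y$ as integer-valued functions on $Y$, but $\chi_{Z^\star}\circ D$ and $\chi_W \circ D$ are $\{0,1\}$-valued, so one must argue at the level of probabilities: $\Pr[\chi_{Z^\star}\circ D(\cdot)=1] - \Pr[\chi_W\circ D(\cdot)=1] = \Pr[D(\cdot)\in Z^\star\setminus W] - \Pr[D(\cdot)\in W\setminus Z^\star]$ for either input distribution, and then expand each of these probabilities over the singletons in $Z^\star \ominus W$ and apply the triangle inequality across the two input distributions. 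This is elementary but needs care to keep the signs and the absolute values in the right places; once set up correctly, the rest is a direct substitution of the two security hypotheses and the definition of $r$.
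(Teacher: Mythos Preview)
Your proposal is correct and follows essentially the same route as the paper's proof: choose a maximizing subset $Z^\star$ (the paper's $Y_0$), approximate it by some $W$ with $\chi_W\in\mathcal{C}'$ and $|Z^\star\ominus W|\le r$ (the paper's $Y_1$), expand the difference $(\mu(Z^\star)-\mu'(Z^\star))-(\mu(W)-\mu'(W))$ over the singletons of $Z^\star\ominus W$, and bound each singleton term by $\varepsilon_1$ and the remaining $\chi_W\circ D$ term by $\varepsilon_3$. The only cosmetic difference is that the paper fixes the sign by choosing $Y_0$ with $\mu(Y_0)-\mu'(Y_0)\ge 0$, whereas you work with advantages throughout and (implicitly) invoke the reverse triangle inequality; your final paragraph already identifies exactly the computation needed, and it goes through without difficulty.
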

\begin{proof}
For each distinguisher $D \colon X \to Y := \{0,1,\dots,n-1\}$, we write $\mu(Z) := Pr[D(U_X) \in Z]$ and $\mu'(Z) := Pr[D(G(U_S)) \in Z]$ for a subset $Z \subset Y$.
Let $Y_0$ be a subset of $Y$ that attains the maximum of the second expression \eqref{eq:statistical_distance_2} of the statistical distance;
\begin{equation}
\Delta(D(U_X),D(G(U_S))) = | \mu(Y_0) - \mu'(Y_0) | \enspace.
\end{equation}
Note that $Y_0$ can be chosen in such a way that $\mu(Y_0) - \mu'(Y_0) \geq 0$ (if this inequality fails, use $Y \setminus Y_0$ instead of $Y_0$), therefore
\begin{equation}
\label{eq:thm__nb-PRG_implication__subset_for_maximum}
\Delta(D(U_X),D(G(U_S))) = \mu(Y_0) - \mu'(Y_0) \enspace.
\end{equation}
%
Moreover, by the definition of $r$, there is a subset $Y_1 \subset Y$ satisfying that $\chi_{Y_1} \in \mathcal{C}'$ and $d_{\mathrm{H}}(\chi_{Y_0},\chi_{Y_1}) = | Y_0 \ominus Y_1 | \leq r$.
Now we have
\begin{equation}
\nu(Y_0) - \nu(Y_1) = \nu(Y_0 \setminus Y_1) - \nu(Y_1 \setminus Y_0) \mbox{ for each } \nu \in \{\mu,\mu'\} \enspace,
\end{equation}
therefore we have
\begin{equation}
\begin{split}
&(\mu(Y_0) - \mu'(Y_0)) - (\mu(Y_1) - \mu'(Y_1)) \\
={}&(\mu(Y_0) - \mu(Y_1)) - (\mu'(Y_0) - \mu'(Y_1)) \\
={}& (\mu(Y_0 \setminus Y_1) - \mu'(Y_0 \setminus Y_1)) - (\mu(Y_1 \setminus Y_0) - \mu'(Y_1 \setminus Y_0)) \enspace.
\end{split}
\end{equation}
Moreover, the right-hand side is equal to
\begin{equation}
\begin{split}
& \sum_{y \in Y_0 \setminus Y_1} (\mu(\{y\}) - \mu'(\{y\})) - \sum_{y \in Y_1 \setminus Y_0} (\mu(\{y\}) - \mu'(\{y\})) \\
\leq{}& \sum_{y \in Y_0 \ominus Y_1} | \mu(\{y\}) - \mu'(\{y\}) | \\
={}& \sum_{y \in Y_0 \ominus Y_1} | Pr[\chi_y \circ D(U_X) = 1] - Pr[\chi_y \circ D(G(U_S)) = 1] | \\
={}& \sum_{y \in Y_0 \ominus Y_1} \mathsf{Adv}_{\chi_y \circ D}(G) \enspace.
\end{split}
\end{equation}
Now if $D$ has computational complexity bounded by $T$, then the assumption on $G$ and the definition of $\delta_1$ imply that
\begin{equation}
\sum_{y \in Y_0 \ominus Y_1} \mathsf{Adv}_{\chi_y \circ D}(G)
\leq \sum_{y \in Y_0 \ominus Y_1} \varepsilon_1
= | Y_0 \ominus Y_1 | \cdot \varepsilon_1
\leq r \varepsilon_1 \enspace.
\end{equation}
Summarizing, we have
\begin{equation}
(\mu(Y_0) - \mu'(Y_0)) - (\mu(Y_1) - \mu'(Y_1)) \leq r \varepsilon_1 \enspace.
\end{equation}
This and \eqref{eq:thm__nb-PRG_implication__subset_for_maximum} implies that
\begin{equation}
\begin{split}
&\Delta(D(U_X),D(G(U_S))) \\
&= (\mu(Y_0) - \mu'(Y_0)) - (\mu(Y_1) - \mu'(Y_1)) + (\mu(Y_1) - \mu'(Y_1)) \\
&\leq r \varepsilon_1 + (Pr[D(U_X) \in Y_1] - Pr[D(G(U_S)) \in Y_1]) \\
&\leq r \varepsilon_1 + |Pr[\chi_{Y_1} \circ D(U_X) = 1] - Pr[\chi_{Y_1} \circ D(G(U_S)) = 1]| \\
&= r \varepsilon_1 + \mathsf{Adv}_{\chi_{Y_1} \circ D}(G) \leq r \varepsilon_1 + \varepsilon_3 \enspace,
\end{split}
\end{equation}
concluding the proof of Theorem \ref{thm:nb-PRG_implication}.
\end{proof}
Regarding the relation between parameters in Theorem \ref{thm:nb-PRG_implication}, first note that it is natural by the definitions to expect that $\delta_1 \leq \delta_3$, which allows us to suppose that $\varepsilon_1 \leq \varepsilon_3$.
Now let us imagine the following situation: We can find an appropriate subset $\mathcal{C}' \subset \mathcal{C}$ in such a way that every characteristic function $\chi_Z \in \mathcal{C}'$ has low computational complexity and the quantity $r := r(\mathcal{C},\mathcal{C}')$ is small.
In this case, $\delta_3$ can be small as well as $r$, and it would make the implication relation given by Theorem \ref{thm:nb-PRG_implication} more efficient than those in Propositions \ref{prop:nb-PRG_implication_1} and \ref{prop:nb-PRG_implication_2}, therefore the above-mentioned trade-off is improved.
Hence a study of FDPs (in particular, those for functions with $1$-bit output sets) will contribute to establish a better relation between PRGs and nb-PRGs.
\begin{remark}
\label{rem:risk-hedging}
We mention that, for the two applications of FDPs discussed in the last two sections, a kind of \lq\lq risk-hedging'' relation exists as follows.
Namely, if we find that the quantity $r(\mathcal{C},\mathcal{C}')$ tends to be large in general, then it would support the argument in Section \ref{sec:hash_functions} to show that keyless hash functions under consideration would have better security.
On the other hand, if we find that the quantity $r(\mathcal{C},\mathcal{C}')$ tends to be small in general, then it would support the argument in Section \ref{sec:nb-PRG} to show that overheads in parameters for nb-PRGs compared to PRGs would be practically small.
\end{remark}

At the last of this section, we give an example of the possible choices of the distinguished subset $\mathcal{C}'$ of $\mathcal{C}$.
We consider the case that $n = 2^{\ell}$ for an integer $\ell$; now the set $Y$ is identified with $\{0,1\}^{\ell}$ via the binary expression of integers, and each element $\chi_Z$ of $\mathcal{C}$ is regarded as a boolean function $\{0,1\}^{\ell} \to \{0,1\}$ with $\ell$-bit input.
In this setting, our task is to find a set $\mathcal{C}'$ of \lq\lq easy'' boolean functions with $\ell$-bit inputs, for which each function in $\mathcal{C}$ can be closely approximated by some function in $\mathcal{C}'$.
Here we consider $\ell$-variable \emph{disjunctive normal form} (\emph{DNF}) \emph{formulae}; recall that a \emph{literal} is $y_i$ or $\overline{y_i}$ (logical NOT of $y_i$) where $y_1,\dots,y_{\ell} \in \{0,1\}$ are input bits for a boolean function, a \emph{term} is a logical AND of literals, and a DNF formula is a logical OR of terms.
The number of terms in a DNF formula is called the \emph{size} of the formula.
It is straightforward to show that any $\ell$-bit input boolean function is equivalent to some DNF formula of size $2^{\ell}$.
Below we will define $\mathcal{C}'$ to be the set of ($\ell$-bit input) boolean functions which are either a logical XOR of at most $\ell$ DNF formulae of size significantly smaller than $2^{\ell}$, or the logical NOT of such a function.
Now the functions in $\mathcal{C}'$ is expected to be easier to compute than general boolean functions, hence the overhead $\delta_3$ in the bound for the distinguisher's computational complexity in Theorem \ref{thm:nb-PRG_implication} would be significantly better (i.e., smaller) than its counterpart $\delta_2$ in Proposition \ref{prop:nb-PRG_implication_2}, as desired.

Before specifying the sizes of the DNF formulae in the choice of $\mathcal{C}'$, first we give the following argument.
A boolean function $f \colon \{0,1\}^{\ell} \to \{0,1\}$ is called \emph{monotone}, if $y_i \leq y'_i$ for every $i \in \{1,\dots,\ell\}$ implies $f(y_1,\dots,y_{\ell}) \leq f(y'_1,\dots,y'_{\ell})$.
We note that $\chi_Z \in \mathcal{C}$ is monotone if and only if $Z \subset \{0,1\}^{\ell}$ is a filter (i.e., upper-closed set) of the $\ell$-dimensional lattice $\{0,1\}^{\ell}$.
Then we have the following result:
\begin{lemma}
\label{lem:boolean_function_to_monotone}
Any $\chi_Z \in \mathcal{C}$ with $\vec{0} := (0,0,\dots,0) \not\in Z$ can be expressed as the logical XOR of at most $\ell$ monotone boolean functions.
\end{lemma}
\begin{proof}
Let $Z_1$ denote the filter of $\{0,1\}^{\ell}$ generated by the minimal (with respect to the order of $\{0,1\}^{\ell}$) elements of $Z$.
Then we have $Z \subset Z_1$ and $\chi_Z = \chi_{Z_1} \oplus \chi_{Z_1 \setminus Z}$ where $\oplus$ denotes the logical XOR.
The claim holds if $Z_1 \setminus Z = \emptyset$.
On the other hand, if $Z_1 \setminus Z \neq \emptyset$, then we iterate the process to decompose $\chi_{Z_1 \setminus Z}$.
As the minimum weight of minimal elements of $Z_1 \setminus Z$ is strictly larger than the minimum weight of minimal elements of $Z$, and $\vec{0} \not\in Z$ by the assumption, the process terminates with at most $\ell$ steps, giving the decomposition $\chi_Z = \chi_{Z_1} \oplus \cdots \oplus \chi_{Z_{\ell'}}$ with $Z_i$ being filters of $\{0,1\}^{\ell}$, $1 \leq i \leq \ell' \leq \ell$.
Hence the claim holds.
\end{proof}

By the lemma, for any $\chi_Z \in \mathcal{C}$, either $\chi_Z$ or $\overline{\chi_Z} = \chi_{\{0,1\}^{\ell} \setminus Z}$ can be expressed as the logical XOR of at most $\ell$ monotone boolean functions.
On the other hand, for each of these monotone functions, recently Blais, H{\aa}stad, Servedio and Tan \cite{BHST14} showed the following result:
\begin{proposition}
[See \cite{BHST14}]
\label{prop:approximation_by_DNF}
For any $0 < \eta < 1$, every $\ell$-bit input monotone boolean function $f$ can be approximated by a DNF formula $g$ of size $2^{\ell - \Omega(\sqrt{\ell})}$ satisfying that $d(f,g) \leq \eta \cdot 2^{\ell}$ (where the dependence on $\eta$ is omitted in the $\Omega$ notation for simplicity).
\end{proposition}

Owing to the result, we define the subset $\mathcal{C}'$ of $\mathcal{C}$ as above, with the sizes of the DNF formulae satisfying the bound in Proposition \ref{prop:approximation_by_DNF}.
Then by Lemma \ref{lem:boolean_function_to_monotone} and Proposition \ref{prop:approximation_by_DNF}, we have $r = r(\mathcal{C},\mathcal{C}') \leq \ell \eta \cdot 2^{\ell}$, which is $o(2^{\ell})$ if $\eta = o(1/\ell)$.
This makes the trade-off in Theorem \ref{thm:nb-PRG_implication} better than Proposition \ref{prop:nb-PRG_implication_1} as desired, if $\varepsilon_3$ is not too larger than $r \cdot \varepsilon_1$.

We note that there are several results in the literature on approximations of boolean functions by not only DNF formulae but also those in other special classes.
For example, upper approximations of boolean functions (i.e., approximations of $f$ by $g$ satisfying that $f(y) \leq g(y)$ for every $y \in \{0,1\}^{\ell}$) by affine boolean functions were studied in e.g., \cite{HSSW10}.
The authors hope that the present work provides another motivation for the well-studied area of good approximations of boolean functions.

\section{Mathematical Examples of FDPs}
\label{sec:example_FDP}

This section is devoted to describe some examples for mathematical studies of FDPs themselves, rather than their cryptographic applications such as ones discussed in Sections \ref{sec:hash_functions} and \ref{sec:nb-PRG}.
The authors hope that one would feel that FDPs themselves are of independent interest as mathematical problems and mathematical studies of FDPs will be promoted.

\subsection{Vector spaces and their subspaces: A general bound}
\label{subsec:example_FDP_vector_spaces}

The examples of FDPs discussed below can be interpreted in the following manner.
The set $\mathcal{C}$ forms a finite-dimensional vector space over a finite field $\mathbb{F}$, with a distinguished basis $v_1,\dots,v_d$ where $d := \dim(\mathcal{C})$, hence each element of $\mathcal{C}$ admits a vector expression.
A subset $\mathcal{C}'$ is a linear subspace of $\mathcal{C}$, and the distance $d(f,g)$ is defined to be the (generalized) Hamming distance with respect to the vector expressions of $f,g \in \mathcal{C}$.
In this subsection, we show a general upper and lower bounds of the quantity $r(\mathcal{C},\mathcal{C}')$ in this case.
Namely, we have the following:
\begin{proposition}
\label{prop:upper_bound_by_codimension}
In the above setting, let $\ell$ denote the minimal integer $\ell'$ satisfying that $\sum_{i=0}^{\ell'} \binom{d}{i} (|\mathbb{F}|-1)^i \geq |\mathbb{F}|^{\mathrm{codim}_{\mathcal{C}}(\mathcal{C}')}$, where $\mathrm{codim}_{\mathcal{C}}(\mathcal{C}')$ denotes the codimension $d - \dim(\mathcal{C}')$ of $\mathcal{C}'$ in $\mathcal{C}$.
Then we have $\ell \leq r(\mathcal{C},\mathcal{C}') \leq \mathrm{codim}_{\mathcal{C}}(\mathcal{C}')$.
\end{proposition}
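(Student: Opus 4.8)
The statement has two inequalities, which I would attack by completely separate arguments. For the upper bound $r(\mathcal{C},\mathcal{C}') \leq \operatorname{codim}_{\mathcal{C}}(\mathcal{C}')$, the plan is to pick a basis adapted to the distinguished coordinates: since $\mathcal{C}'$ has codimension $c := \operatorname{codim}_{\mathcal{C}}(\mathcal{C}')$, after possibly relabeling the distinguished coordinates $v_1,\dots,v_d$ one can write $\mathcal{C}'$ as the graph of a linear map expressing $c$ of the coordinates in terms of the other $d-c$. Concretely, I would argue that there is a size-$c$ subset $I \subseteq \{1,\dots,d\}$ of coordinate positions such that the projection $\mathcal{C}' \to \mathbb{F}^{\{1,\dots,d\}\setminus I}$ (forgetting the coordinates in $I$) is an isomorphism; this follows from $\dim \mathcal{C}' = d - c$ by a standard linear-algebra/Gaussian-elimination argument (choose $I$ to be a set of pivot columns for a matrix whose null space is $\mathcal{C}'$, or dually a set of free columns for a spanning matrix of $\mathcal{C}'$). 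Given any $f \in \mathcal{C}$, let $g \in \mathcal{C}'$ be the unique element agreeing with $f$ on all coordinates outside $I$; then $f$ and $g$ differ in at most $|I| = c$ positions, so $d_{\mathrm{H}}(f,\mathcal{C}') \leq c$, and taking the sup over $f$ gives $r(\mathcal{C},\mathcal{C}') \leq c$.

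For the lower bound $\ell \leq r(\mathcal{C},\mathcal{C}')$, the plan is a counting (pigeonhole / sphere-covering) argument. Suppose for contradiction that $r(\mathcal{C},\mathcal{C}') = r < \ell$. Then every $f \in \mathcal{C}$ lies within Hamming distance $r$ of some $g \in \mathcal{C}'$, i.e. the Hamming balls of radius $r$ centered at the elements of $\mathcal{C}'$ cover all of $\mathcal{C}$. Since $\mathcal{C} \cong \mathbb{F}^d$, the size of such a ball is $\sum_{i=0}^{r} \binom{d}{i}(|\mathbb{F}|-1)^i$, and there are $|\mathcal{C}'| = |\mathbb{F}|^{d-c}$ centers, so covering forces
\begin{equation}
|\mathbb{F}|^{d-c} \cdot \sum_{i=0}^{r} \binom{d}{i}(|\mathbb{F}|-1)^i \;\geq\; |\mathcal{C}| = |\mathbb{F}|^{d} \enspace,
\end{equation}
that is $\sum_{i=0}^{r} \binom{d}{i}(|\mathbb{F}|-1)^i \geq |\mathbb{F}|^{c}$. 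But by the minimality in the definition of $\ell$, any $r' < \ell$ fails this inequality, contradicting $r < \ell$. Hence $r(\mathcal{C},\mathcal{C}') \geq \ell$. (Here I am using that the map $\mathcal{C}' \hookrightarrow \mathcal{C}$ is an isometric embedding in the coordinate Hamming metric, which is immediate since the distinguished basis of $\mathcal{C}'$-coordinates is just the restriction of the coordinates on $\mathcal{C}$.)

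The linear-algebra input to the upper bound is routine, so I expect the only genuinely delicate point to be making the sphere-covering argument for the lower bound airtight — specifically, being careful that the covering inequality is a consequence of $r(\mathcal{C},\mathcal{C}') \leq r$ (each point of $\mathcal{C}$ is covered, though balls may overlap, so one only gets the $\geq$ direction, which is exactly what is needed), and correctly translating "minimal $\ell'$ with $\sum_{i=0}^{\ell'}\binom{d}{i}(|\mathbb{F}|-1)^i \geq |\mathbb{F}|^{c}$" into "every $r' \leq \ell - 1$ violates the inequality." Neither point is deep, but both are the sort of thing where an off-by-one slip is easy; everything else is bookkeeping.
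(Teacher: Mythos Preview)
Your proposal is correct and follows essentially the same approach as the paper: the upper bound via a Gaussian-elimination/systematic-form argument (the paper phrases it as choosing a basis $u_1,\dots,u_{d'}$ of $\mathcal{C}'$ with designated pivot coordinates, which is equivalent to your projection isomorphism onto the complement of $I$), and the lower bound via the same sphere-covering count (the paper states it directly rather than by contradiction, but the inequality $|\mathcal{C}| \leq |\mathcal{C}'|\sum_{i=0}^{r(\mathcal{C},\mathcal{C}')}\binom{d}{i}(|\mathbb{F}|-1)^i$ is identical). Your closing remark about an ``isometric embedding'' is unnecessary since $\mathcal{C}'$ is literally a subset of $\mathcal{C}$ with the induced metric, but this does no harm.
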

\begin{proof}
Put $d' := \dim(\mathcal{C}')$, therefore $\mathrm{codim}_{\mathcal{C}}(\mathcal{C}') = d - d'$.
First we prove the lower bound.
For each $w \in \mathcal{C}$ and $k \geq 0$, put $B(w,k) := \{w' \in \mathcal{C} \mid d(w,w') \leq k\}$.
Then we have $|B(w,k)| = \sum_{i=0}^{k} \binom{d}{i} (|\mathbb{F}| - 1)^i$.
On the other hand, by the definition of $r(\mathcal{C},\mathcal{C}')$, we have $\mathcal{C} \subset \bigcup_{w \in \mathcal{C}'} B(w,r(\mathcal{C},\mathcal{C}'))$.
This implies that
\begin{equation}
|\mathcal{C}| \leq |\mathcal{C}'| \cdot \sum_{i=0}^{r(\mathcal{C},\mathcal{C}')} \binom{d}{i} (|\mathbb{F}| - 1)^i \enspace,
\end{equation}
or equivalently $|\mathbb{F}|^{d-d'} = |\mathcal{C}| / |\mathcal{C}'| \leq \sum_{i=0}^{r(\mathcal{C},\mathcal{C}')} \binom{d}{i} (|\mathbb{F}| - 1)^i$.
Hence we have $\ell \leq r(\mathcal{C},\mathcal{C}')$ by the choice of $\ell$.

Secondly, we prove the upper bound.
By applying Gaussian elimination to any basis of $\mathcal{C}'$, it follows that there exist a basis $u_1,\dots,u_{d'}$ of $\mathcal{C}'$ and distinct indices $i_1,\dots,i_{d'} \in \{1,2,\dots,d\}$ with the property that, for each $1 \leq j \leq d'$, the coefficient of a basis element $v_{i_j}$ of $\mathcal{C}$ in $u_j$ is $1$ and the coefficient of $v_{i_j}$ in any other $u_k$ ($k \neq j$) is $0$.
Now for an arbitrary element $w = \sum_{i=1}^{d} c_i v_i \in \mathcal{C}$ ($c_i \in \mathbb{F}$), the above property of $u_1,\dots,u_{d'}$ implies that the distance between $w$ and $w' := \sum_{j=1}^{d'} c_{i_j} u_j \in \mathcal{C}'$ is at most $d - d'$, therefore $d(w,\mathcal{C}') \leq d - d'$.
Hence we have $r(\mathcal{C},\mathcal{C}') \leq d - d'$, concluding the proof of Proposition \ref{prop:upper_bound_by_codimension}.
\end{proof}
The next result shows how the lower and upper bounds in Proposition \ref{prop:upper_bound_by_codimension} are close to each other:
\begin{proposition}
\label{prop:example_bound_distance_tightness}
In the setting of Proposition \ref{prop:upper_bound_by_codimension}, we have
\begin{equation}
\begin{split}
\ell \leq \mathrm{codim}_{\mathcal{C}}(\mathcal{C}')
&\leq \log_{|\mathbb{F}|} \left( c_{\ell} (|\mathbb{F}|-1)^{\ell} d^{\ell} / \ell! \right) \\
&= \ell ( \log_{|\mathbb{F}|} (|\mathbb{F}| - 1) + \log_{|\mathbb{F}|} d ) + \log_{|\mathbb{F}|} c_{\ell} - \log_{|\mathbb{F}|} \ell! \enspace,
\end{split}
\end{equation}
where $c_{\ell} = \ell + 1$ if $\mathbb{F}$ is the two-element field $\mathbb{F}_2$, and $c_{\ell} = (|\mathbb{F}|-1)/(|\mathbb{F}|-2)$ otherwise.
\end{proposition}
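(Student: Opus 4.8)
The first inequality $\ell \le \mathrm{codim}_{\mathcal{C}}(\mathcal{C}')$ is already contained in Proposition \ref{prop:upper_bound_by_codimension}, so the actual content is the second inequality, which I would extract directly from the defining property of $\ell$. Write $q := |\mathbb{F}|$ and $c := \mathrm{codim}_{\mathcal{C}}(\mathcal{C}')$. By the choice of $\ell$ we have $q^{c} \le \sum_{i=0}^{\ell}\binom{d}{i}(q-1)^i$, so it suffices to bound this sum above by $c_\ell (q-1)^\ell d^\ell/\ell!$ and then take $\log_q$.

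To bound the sum, I would first use the crude estimate $\binom{d}{i}\le d^i/i!$, so that the $i$-th term is at most $(d(q-1))^i/i!$, and then invoke the (easy but essential) fact that $\ell\le d$: indeed $\ell\le c\le d$ by Proposition \ref{prop:upper_bound_by_codimension}, or more directly $\ell'=d$ already satisfies the defining inequality since $\sum_{i=0}^{d}\binom{d}{i}(q-1)^i=q^d\ge q^{c}$. With $\ell\le d$ in hand, for every $1\le i\le\ell$ the ratio of the $(i-1)$-st term to the $i$-th term of $\sum_{i=0}^{\ell}(d(q-1))^i/i!$ equals $i/(d(q-1))\le 1/(q-1)$, so the partial sum is dominated by a geometric series of ratio $1/(q-1)$ attached to its last term $(d(q-1))^\ell/\ell!$.

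If $q\ge 3$, summing that geometric series gives $\sum_{i=0}^{\ell}\binom{d}{i}(q-1)^i\le \frac{q-1}{q-2}\cdot\frac{(d(q-1))^\ell}{\ell!}=c_\ell(q-1)^\ell d^\ell/\ell!$, which is the asserted bound with $c_\ell=(q-1)/(q-2)$. If $q=2$ the same ratio bound reads $i/d\le 1$, so every one of the $\ell+1$ terms is at most the last term $d^\ell/\ell!$, hence $\sum_{i=0}^{\ell}\binom{d}{i}\le(\ell+1)d^\ell/\ell!$, i.e. the bound with $c_\ell=\ell+1$. Applying $\log_q$ to $q^{c}\le c_\ell(q-1)^\ell d^\ell/\ell!$ yields $c\le\log_q(c_\ell(q-1)^\ell d^\ell/\ell!)$, and the displayed equality in the statement is merely the additive expansion $\log_q(c_\ell(q-1)^\ell d^\ell/\ell!)=\ell(\log_q(q-1)+\log_q d)+\log_q c_\ell-\log_q\ell!$.

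The whole argument is elementary; the only spot that requires a little attention is the split into the cases $q=2$ and $q\ge 3$ when summing the geometric series, together with the observation that $\ell\le d$, which is exactly what keeps the term ratio $i/(d(q-1))$ below $1/(q-1)$ over the entire range of summation --- without it the partial sum of this exponential-type series need not be controlled by its final term.
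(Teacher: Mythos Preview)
Your proof is correct and follows essentially the same approach as the paper's: both start from $q^{c}\le\sum_{i=0}^{\ell}\binom{d}{i}(q-1)^i$, bound $\binom{d}{i}\le d^i/i!$, and use $\ell\le d$ to control the sum by $c_\ell(q-1)^\ell d^\ell/\ell!$, splitting into the cases $q=2$ and $q\ge3$. The only cosmetic difference is that the paper packages the geometric-series estimate as an induction on $\ell$ (verifying $c'_{m-1}/(q-1)+1=c'_m$ at each step), whereas you bound the term ratios directly and sum the geometric series in one shot.
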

\begin{proof}
It suffices to prove the second inequality.
As $|\mathbb{F}|^{\mathrm{codim}_{\mathcal{C}}(\mathcal{C}')} \leq \sum_{i=0}^{\ell} \binom{d}{i}(|\mathbb{F}|-1)^i$ by the definition of $\ell$, it suffices to show that $\sum_{i=0}^{\ell} \binom{d}{i}(|\mathbb{F}|-1)^i \leq c_{\ell} (|\mathbb{F}|-1)^{\ell} d^{\ell} / \ell!$, or more generally, $\sum_{i=0}^{m} \binom{N}{i}(q-1)^i \leq c'_m (q-1)^m N^m / m!$ for all integers $N \geq m \geq 0$ and $q \geq 2$, where we put $c'_m := (q-1)/(q-2)$ if $q \geq 3$ and $c'_m := m+1$ if $q = 2$, and we set $0^0 = 1$ (note that $\ell \leq \mathrm{codim}_{\mathcal{C}}(\mathcal{C}') \leq d$).
We use induction on $m$.
The case $m = 0$ is trivial.
For the case $m \geq 1$, we have
\begin{equation}
\begin{split}
\sum_{i=0}^{m} \binom{N}{i}(q-1)^i
&= \sum_{i=0}^{m-1} \binom{N}{i}(q-1)^i + \binom{N}{m}(q-1)^m \\
&\leq \frac{ c'_{m-1} (q-1)^{m-1} N^{m-1} }{ (m-1)! } + \binom{N}{m} (q-1)^m \qquad \\
&\leq \frac{ c'_{m-1} (q-1)^{m-1} N^{m-1} }{ (m-1)! } + \frac{ (q-1)^m N^m }{ m! } \\
&= \frac{ (q-1)^m N^m }{ m! } \left( \frac{ c'_{m-1} m }{ (q-1) N } + 1 \right) \enspace.
\end{split}
\end{equation}
By the relation $m \leq N$ and the definition of $c'_m$, we have
\begin{equation}
\frac{ c'_{m-1} m }{ (q-1) N } + 1 \leq \frac{ c'_{m-1} }{ q-1 } + 1 = c'_m \enspace,
\end{equation}
therefore the desired inequality holds for this $m$ as well.
Hence the claim of Proposition \ref{prop:example_bound_distance_tightness} holds.
\end{proof}

\subsection{Boolean functions of low degrees}
\label{subsec:example_FDP_Boolean_functions}

As a first concrete example, here we deal with the set $\mathcal{C}$ of the functions $X \to Y$ with $n$-bit inputs and $1$-bit outputs, i.e., we set $X := \{0,1\}^n$ and $Y := \{0,1\}$ (which is relevant to the situation of Section \ref{sec:nb-PRG}).
First note that, when we identify $\{0,1\}$ naturally with $\mathbb{F}_2$, each function $f \colon X \to Y$ can be expressed as an $n$-variable square-free polynomial;
\begin{equation}
f(x_1,\dots,x_n) = \sum_{\vec{a} = (a_1,\dots,a_n) \in \{0,1\}^n} f(\vec{a}) \chi_{\vec{a}}(x_1,\dots,x_n)
\end{equation}
where we put
\begin{equation}
\chi_{\vec{a}}(x_1,\dots,x_n) := \prod_{i;a_i = 0} (1 - x_i) \prod_{i;a_i = 1} x_i \mbox{ for } \vec{a} = (a_1,\dots,a_n)
\end{equation}
(note that $\chi_{\vec{a}}(x_1,\dots,x_n) = 1$ if $x_i = a_i$ for every $i$ and $\chi_{\vec{a}}(x_1,\dots,x_n) = 0$ otherwise, therefore $\chi_{\vec{a}}$ is indeed the characteristic function of $\vec{a} \in \{0,1\}^n$).
For example, when $n = 2$ we have
\begin{equation}
\begin{split}
f(x_1,x_2)
={}& f(0,0)(1-x_1)(1-x_2) + f(0,1)(1-x_1)x_2 \\
&+ f(1,0)x_1(1-x_2) + f(1,1)x_1x_2 \enspace.
\end{split}
\end{equation}
Now for each $0 \leq k \leq n$, we set $\mathcal{C}' = \mathcal{C}'_k$ to be the subset of $\mathcal{C}$ consisting of functions that can be expressed as a square-free polynomial of degree $\leq k$.
For example, $\mathcal{C}'_0$ is the set of constant functions, and $\mathcal{C}'_1$ is the set of affine functions.
The distance $d(f,g) = d_{\mathrm{H}}(f,g)$ is defined as in \eqref{eq:definition_distance}.
Note that changing the value of $f \in \mathcal{C}$ at a point $\vec{a} \in \{0,1\}^n$ is equivalent to adding the function $\chi_{\vec{a}}$ to the $f$.
In this situation, we have the following upper and lower bounds for the quantity $r(\mathcal{C},\mathcal{C}'_k)$:
\begin{proposition}
\label{prop:example_bound_distance}
In the above setting, put $u_{n,k} := \sum_{i=k+1}^{n} \binom{n}{i}$, and let $\ell_{n,k}$ be the minimum integer $\ell$ satisfying that $2^{u_{n,k}} \leq \sum_{i=0}^{\ell} \binom{2^n}{i}$.
Then we have
\begin{equation}
\ell_{n,k} \leq r(\mathcal{C},\mathcal{C}'_k) \leq \min\{u_{n,k},2^{n-1}\} \enspace.
\end{equation}
\end{proposition}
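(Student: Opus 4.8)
The plan is to read off the lower bound and the bound $r(\mathcal{C},\mathcal{C}'_k)\le u_{n,k}$ directly from Proposition \ref{prop:upper_bound_by_codimension}, and then to establish the extra upper bound $r(\mathcal{C},\mathcal{C}'_k)\le 2^{n-1}$ by an elementary argument using the constant functions.

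First I would set up the vector-space picture required by Proposition \ref{prop:upper_bound_by_codimension}. Identifying $\{0,1\}$ with $\mathbb{F}_2$, the set $\mathcal{C}$ of all functions $\{0,1\}^n\to\{0,1\}$ is an $\mathbb{F}_2$-vector space; the displayed expansion $f=\sum_{\vec a}f(\vec a)\chi_{\vec a}$ exhibits the functions $\chi_{\vec a}$ ($\vec a\in\{0,1\}^n$) as a basis, so $\dim\mathcal{C}=d=2^n$, and the distance $d_{\mathrm{H}}$ of \eqref{eq:definition_distance} is the Hamming distance on the $\chi_{\vec a}$-coordinates. On the other hand, expanding each $\chi_{\vec a}$ as an $\mathbb{F}_2$-polynomial and counting ($2^{2^n}$ functions versus $2^{2^n}$ square-free polynomials in $x_1,\dots,x_n$) shows that the square-free monomials $\prod_{i\in S}x_i$, $S\subseteq\{1,\dots,n\}$, also form a basis of $\mathcal{C}$; that is, the square-free polynomial representation is unique. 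With respect to this basis, $\mathcal{C}'_k$ is precisely the span of the monomials of degree $\le k$, so $\dim\mathcal{C}'_k=\sum_{i=0}^k\binom ni$ and hence $\mathrm{codim}_{\mathcal{C}}(\mathcal{C}'_k)=\sum_{i=k+1}^n\binom ni=u_{n,k}$; in particular $\mathcal{C}'_k$ is a linear subspace, as the proposition requires.

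Now I would simply apply Proposition \ref{prop:upper_bound_by_codimension} with $\mathbb{F}=\mathbb{F}_2$. Since $|\mathbb{F}|-1=1$, the integer called $\ell$ there is the least $\ell'$ with $\sum_{i=0}^{\ell'}\binom{2^n}{i}\ge 2^{u_{n,k}}$, which is exactly $\ell_{n,k}$; and the codimension bound of that proposition reads $r(\mathcal{C},\mathcal{C}'_k)\le u_{n,k}$. This already gives $\ell_{n,k}\le r(\mathcal{C},\mathcal{C}'_k)\le u_{n,k}$. To finish I would prove $r(\mathcal{C},\mathcal{C}'_k)\le 2^{n-1}$: the subspace $\mathcal{C}'_k$ contains $\mathcal{C}'_0$, hence both constant functions $0$ and $1$; for any $f\in\mathcal{C}$ one has $d_{\mathrm{H}}(f,0)=|f^{-1}(1)|$ and $d_{\mathrm{H}}(f,1)=2^n-|f^{-1}(1)|$, whose minimum is at most $2^{n-1}$, so $d_{\mathrm{H}}(f,\mathcal{C}'_0)\le 2^{n-1}$, and therefore $d_{\mathrm{H}}(f,\mathcal{C}'_k)\le 2^{n-1}$ as well. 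Taking the supremum over $f\in\mathcal{C}$ and combining with the codimension bound yields $r(\mathcal{C},\mathcal{C}'_k)\le\min\{u_{n,k},2^{n-1}\}$, which together with $\ell_{n,k}\le r(\mathcal{C},\mathcal{C}'_k)$ is the claim.

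I do not anticipate a genuine obstacle: the statement is an instantiation of the general bound of Proposition \ref{prop:upper_bound_by_codimension} plus a one-line packing estimate coming from the two constant functions. The only points requiring a little care are the computation $\mathrm{codim}_{\mathcal{C}}(\mathcal{C}'_k)=u_{n,k}$ (equivalently, that the algebraic normal form respects the degree filtration), and the observation — easy but perhaps not immediately obvious — that the constant functions alone already yield the bound $2^{n-1}$, which beats the generic codimension bound $u_{n,k}$ precisely in the regime where $k$ is small relative to $n$.
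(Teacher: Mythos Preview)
Your proposal is correct and follows essentially the same approach as the paper: both derive the lower bound $\ell_{n,k}$ and the upper bound $u_{n,k}$ by instantiating Proposition \ref{prop:upper_bound_by_codimension} over $\mathbb{F}_2$ (after computing $\mathrm{codim}_{\mathcal{C}}(\mathcal{C}'_k)=u_{n,k}$), and both obtain the additional bound $2^{n-1}$ via the two constant functions in $\mathcal{C}'_0\subset\mathcal{C}'_k$. Your write-up is somewhat more explicit about why $\mathcal{C}'_k$ is a subspace of the stated codimension, but the argument is the same.
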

\begin{proof}
For the upper bound, note that $r(\mathcal{C},\mathcal{C}'_k) \leq 2^{n-1}$, as any function $f \in \mathcal{C}$ can be converted into a constant function by changing the value $f(x)$ at every point $x \in \{0,1\}^n$ with the property that $f(x)$ is in the minority among the $2^n$ values of $f$ (the number of such points is at most $2^{n-1}$).
Then the upper bound follows from Proposition \ref{prop:upper_bound_by_codimension}, as $\mathcal{C}$ is an $\mathbb{F}_2$-vector space of dimension $2^n$ and $\mathcal{C}'_k$ is its subspace of codimension $u_{n,k}$.
The lower bound also follows from Proposition \ref{prop:upper_bound_by_codimension}.
\end{proof}
By Proposition \ref{prop:example_bound_distance_tightness}, the quantities $\ell_{n,k}$ and $u_{n,k}$ in Proposition \ref{prop:example_bound_distance} satisfy the relation $\ell_{n,k} \leq u_{n,k} \leq n \ell_{n,k} + \log_2 (\ell_{n,k} + 1) - \log_2 \ell_{n,k}!$.
Table \ref{tab:lower_bounds} gives the precise values of $\ell_{n,k}$ for some smaller cases.
\begin{table}[htb]
\centering
\caption{The values of $\ell_{n,k}$ for some small parameters}
\label{tab:lower_bounds}
\begin{tabular}{cc|cccccccc}
&&\multicolumn{8}{c}{$n-k$} \\
&& $1$ & $2$ & $3$ & $4$ & $5$ & $6$ & $7$ & $8$ \\ \hline
    & $2$ & $1$ & $2$ & & & & & & \\
    & $3$ & $1$ & $2$ & $4$ & & & & & \\
    & $4$ & $1$ & $2$ & $4$ & $8$ & & & & \\
$n$ & $5$ & $1$ & $2$ & $5$ & $10$ & $16$ & & & \\
    & $6$ & $1$ & $2$ & $5$ & $13$ & $22$ & $32$ & & \\
    & $7$ & $1$ & $2$ & $6$ & $16$ & $31$ & $49$ & $64$ & \\
    & $8$ & $1$ & $2$ & $6$ & $19$ & $43$ & $75$ & $105$ & $128$
\end{tabular}
\end{table}

Here we introduce a geometric point of view to the above problem.
We introduce some notations.
For a subset $I \subset [n] := \{1,2,\dots,n\}$, put $x_I := \prod_{i \in I} x_i$, and let $a_I$ be the element $(a_1,\dots,a_n)$ of $\{0,1\}^n$ determined by $a_i = 1$ when and only when $i \in I$.
We write $\delta_I := \chi_{a_I}$ for simplicity.
Let $\Delta^{n-1}_+$ be the disjoint union of an isolated point $P$ and the standard $(n-1)$-simplex $\Delta^{n-1}$ on the vertex set $[n]$; we regard $P$ as \lq\lq the $(-1)$-dimensional face'' of $\Delta^{n-1}_+$.
For each $\emptyset \neq I \subset [n]$, let $\langle I \rangle$ denote the $(|I|-1)$-dimensional sub-simplex of $\Delta^{n-1}$ spanned by $I$, and let $\langle I \rangle^o$ be its relative interior (note that $\langle \{i\} \rangle^o = \langle \{i\} \rangle = \{i\}$ for each $i \in [n]$).
On the other hand, we put $\langle \emptyset \rangle = \langle \emptyset \rangle^o := P$.
Now for each function $f(x) = \sum_{I \subset [n]} c_I x_I$ ($c_I \in \mathbb{F}_2$), we define its \emph{geometric realization} $G_f$ by
\begin{equation}
G_f := \bigcup_{I;c_I = 1} \langle I^c \rangle^o \quad \mbox{(disjoint union),}
\end{equation}
where $I^c$ denotes the complement $[n] \setminus I$ of $I$ in $[n]$.
For each $I \subset [n]$, by the definition and the fact that $\delta_I = \sum_{J \supset I} x_J$ (recall that now the values of functions are in $\mathbb{F}_2$), $G_{\delta_I}$ is the (disjoint) union of $P$ and $\langle J \rangle^o$ for all $\emptyset \neq J \subset I^c$, therefore we have $G_{\delta_I} = P \cup \langle I^c \rangle$.
Moreover, for any $0 \leq k \leq n$ and $I \subset [n]$, we have $|I| \geq k+1$ if and only if $\langle I^c \rangle$ is at most $(n-k-2)$-dimensional.
This implies that a function $f \in \mathcal{C}$ belongs to $\mathcal{C}'_k$ if and only if $G_f$ does not intersect with the $(n-k-1)$-dimensional skeleton $\Delta^n_{n-k-1}$ of $\Delta^{n-1}_+$, which consists of the faces of $\Delta^{n-1}_+$ of dimension up to $n-k-1$.

Based on the above observation, we consider the following puzzle.
We imagine a situation that a lamp is associated to each face of $\Delta^{n-1}_+$.
A \emph{state} of $\Delta^{n-1}_+$ is a collection of light/dark properties of all the lamps.
Given a function $f$, the corresponding \emph{initial state} $\mathcal{I}_f$ is defined in such a way that a lamp at a face is light if and only if the relative interior of the face is contained in $G_f$.
At any state, the player of the puzzle is allowed to indicate a face $F$ of $\Delta^{n-1}_+$ (we call it \lq\lq \emph{push the face $F$}''), then the light/dark properties of lamps at $P$ and every sub-face of $F$ are flipped; such a process is regarded as a \emph{move} of the puzzle.
An initial state $\mathcal{I}_f$ is said to be \emph{solved} when the lamps of all faces of $\Delta^n_{n-k-1}$ are switched off by a sequence of moves started from $\mathcal{I}_f$.
With this interpretation, the distance $d(f,\mathcal{C}'_k)$ from $f \in \mathcal{C}$ to $\mathcal{C}'_k$ is the minimum of the number of moves to solve $\mathcal{I}_f$, and the quantity $r(\mathcal{C},\mathcal{C}'_k)$ is the minimal necessary number of moves to solve \emph{any} initial state.

Moreover, we also introduce a simplified puzzle on $\Delta^{n-1}$ instead of $\Delta^{n-1}_+$ by ignoring the isolated point $P$ in the above puzzle.
Let $r'_{n,k}$ denote the minimal necessary number of moves to solve (for the simplified puzzle) any initial state.
Then we have $r(\mathcal{C},\mathcal{C}'_k) = r'_{n,k} + 1$, as for an initial state $\mathcal{I}$ of the simplified puzzle for which solving $\mathcal{I}$ requires precisely $r'_{n,k}$ moves, one of the two initial states of the original puzzle obtained by adding a lamp at $P$ which is light and dark, respectively, requires $r'_{n,k} + 1$ moves.
Hence it suffices to consider the simplified puzzle on $\Delta^{n-1}$ for determining the quantity $r(\mathcal{C},\mathcal{C}'_k)$.
\begin{example}
\label{exmp:geometric_FDP_game}
We set $n = 4$ and show that $r(\mathcal{C},\mathcal{C}'_1) = 6$, or equivalently $r'_{4,1} = 5$.
Note that the general bounds in Proposition \ref{prop:example_bound_distance} only guarantee that $4 \leq r(\mathcal{C},\mathcal{C}'_1) \leq 8$ (note that $u_{4,1} = 11 > 8 = 2^{4-1}$).
We identify naturally each state in the puzzle on $\Delta^{n-1} = \Delta^3$ with each family of non-empty subsets of $[n] = [4]$, and we write $\{i_1,i_2,\dots,i_{\ell}\}$ as $i_1i_2 \cdots i_{\ell}$ for simplicity.
Moreover, to express each state we omit the subsets of $[4]$ of size larger than $2$, as the lamps at faces of dimension at least $n - k - 1 = 2$ are not relevant to determine whether the puzzle has been solved or not.
In other words, in the present situation, we can regard each state as edge and vertex coloring of the complete graph $K_4$.

First, we show that the initial state $\mathcal{I} = \{13,24\}$ requires more than $4$ moves to solve.
Assume contrary that $\mathcal{I}$ can be solved by at most $4$ moves.
If the player pushes the face $1234$, then a state $\{1,2,3,4,12,23,34,41\}$ is obtained.
To solve the state by at most $3$ remaining moves, the player has to push at least one $2$-dimensional face; we may assume by symmetry that the face is $123$.
Then the resulting state is $\{4,13,34,41\}$; however, a case-by-case analysis shows that to solve the state by at most $2$ remaining moves is impossible.
Therefore the player does not push the face $1234$.
On the other hand, if the player pushes a $2$-dimensional face, then we may assume by symmetry that the face is $123$, resulting in a state $\{1,2,3,12,23,24\}$.
To solve the state by at most $3$ remaining moves, the player has to push at least one more $2$-dimensional face.
If it is $124$, then we obtain a state $\{3,4,23,41\}$, but a case-by-case analysis shows that to solve the state by at most $2$ remaining moves is impossible (the case of $234$ is similar by symmetry).
If it is $134$, then we obtain a state $\{2,4,12,13,14,23,24,34\}$, but a case-by-case analysis shows that to solve the state by at most $2$ remaining moves is impossible as well.
Therefore the player does not push a $2$-dimensional face.
This implies that the player should push $13$ and $24$, resulting in a state $\{1,2,3,4\}$, from which to solve the state by at most $2$ remaining moves is impossible.
Hence we have a contradiction, therefore the initial state $S = \{13,24\}$ indeed requires more than $4$ moves to solve.

Secondly, we show that any initial state $\mathcal{I}$ can be solved by at most $5$ moves.
The player can solve $\mathcal{I}$ by at most $4$ moves when no lamps in $\mathcal{I}$ at $1$-dimensional faces are light, therefore $\mathcal{I}$ can be solved by at most $5$ moves when at most $1$ lamp in $\mathcal{I}$ at $1$-dimensional face is light.
When $2$ lamps in $\mathcal{I}$ at $1$-dimensional faces are light, a case-by-case analysis shows that $\mathcal{I}$ can be solved by at most $4$ moves unless $\mathcal{I}$ is of the form $\{i_1i_2,i_3i_4\}$ with $\{i_1,i_2\} \cap \{i_3,i_4\} = \emptyset$, and for any $\mathcal{I}$ of the latter form, $\mathcal{I}$ can be solved by pushing the faces $1234$, $i_1i_2i_3$, $i_1i_2i_4$, $i_1$, and $i_2$.
When $3$ lamps in $\mathcal{I}$ at $1$-dimensional faces are light, the problem can be reduced to the case of $2$ light lamps at $1$-dimensional faces by pushing one of the $3$ light lamps at $1$-dimensional faces.
When $4$ lamps in $\mathcal{I}$ at $1$-dimensional faces are light, the problem can be reduced to the case of $2$ light lamps at $1$-dimensional faces by pushing the face $1234$ unless $\mathcal{I}$ is of the form $\{1,2,3,4,i_1i_3,i_1i_4,i_2i_3,i_2i_4\}$ with $\{i_1,i_2\} \cap \{i_3,i_4\} = \emptyset$, and for any $\mathcal{I}$ of the latter form, $\mathcal{I}$ can be solved by pushing the faces $i_1i_2i_3$, $i_1i_2i_4$, $i_1$, and $i_2$.
When $5$ lamps in $\mathcal{I}$ at $1$-dimensional faces are light, the problem can be reduced to the case of $2$ light lamps at $1$-dimensional faces by pushing an appropriate $2$-dimensional face.
Finally, when $6$ lamps in $\mathcal{I}$ at $1$-dimensional faces are light, the problem can be reduced to the case of no light lamps at $1$-dimensional faces by pushing the face $1234$.
Hence any initial state $\mathcal{I}$ can be solved by at most $5$ moves, therefore we have $r'_{4,1} = 5$ as desired.
\end{example}

From now, we investigate FDPs in the above setting by using Gr\"{o}bner bases.
Recall that $X = \{0,1\}^n$.
Let $R := K[z_v \mid v \in X]$ be a polynomial ring in $2^n$ variables over a field $K$ of characteristic $0$.
We define the following ideal of $R$:
\begin{equation}
I_0 := (z_v{}^2 - 1 \mid v \in X) \subset R \enspace.
\end{equation}
For each $f \in \mathcal{C}$, put
\begin{equation}
z^f := \prod_{v \in X} z_v{}^{f(v)} \enspace.
\end{equation}
Then the set $\{z^f \mid f \in \mathcal{C}\}$ of all square-free monomials in $R$ forms a linear basis of the quotient ring $A_0 := R/I_0$.
Note that $z^f z^g = z^{f+g} \pmod{I_0}$ and the degree $\deg(z^f)$ of $z^f$ in $R$ is equal to $d(f,\underline{0})$ for any $f,g \in \mathcal{C}$, where $\underline{0}$ denotes the function in $\mathcal{C}$ taking constant value $0$.

Let $\mathcal{C}'$ be a subset of $\mathcal{C}$, which need not be a linear subspace of $\mathcal{C}$ unless otherwise specified.
We define the following ideal of $R$:
\begin{equation}
\check I_{\mathcal{C}'} := (z^f - z^g \mid f,g \in \mathcal{C}') \subset R \enspace,
\end{equation}
and consider the ideal $I_{\mathcal{C}'} := I_0 + \check I_{\mathcal{C}'}$ of $R$.
We identify the quotient ring $A_{\mathcal{C}'} := R/I_{\mathcal{C}'}$ with the quotient ring of $A_0$ by the image of $\check I_{\mathcal{C}'}$.
Now fix a graded monomial order, i.e., a monomial order $\prec$ satisfying that $\prod_{v \in X} z_v{}^{\alpha_v} \prec \prod_{v \in X} z_v{}^{\beta_v}$ for any exponents $(\alpha_v)_{v \in X}$ and $(\beta_v)_{v \in X}$ with $\sum_{v \in X} \alpha_v < \sum_{v \in X} \beta_v$.
Let $G$ be a Gr\"{o}bner basis for the ideal $I_{\mathcal{C}'}$, and consider the reduction process with respect to the Gr\"{o}bner basis $G$.
As each generator of $I_{\mathcal{C}'}$ is of the form \lq\lq $\mbox{(monic monomial)} - \mbox{(monic monomial)}$'', $G$ can be chosen in such a way that every element of $G$ is of the same form, and the linear basis of $A_0$ consisting of the square-free monic monomials can be partitioned into equivalence classes when projected onto the quotient ring $A_{\mathcal{C}'}$.
This also implies that the normal form $\mathrm{nf}(z^f)$ of each $f \in \mathcal{C}$ with respect to $G$ is a square-free monic monomial, i.e., of the form $z^g$ with $g \in \mathcal{C}$, and we have
\begin{equation}
\begin{split}
\deg(\mathrm{nf}(z^f))
&= \min\{ \deg(z^{f'}) \mid z^{f'} = z^{f} \pmod{I_{\mathcal{C}'}}\} \\
&= \min\{ \deg(z^{f'}) \mid z^{f'} - z^{f} \in I_{\mathcal{C}'}\} \enspace.
\end{split}
\end{equation}

Now we consider the case that $\underline{0} \in \mathcal{C}'$.
Note that $z^f z^f = 1 = z^{\underline{0}} \pmod{I_{0}}$ for any $f \in \mathcal{C}$.
Now if $f,g \in \mathcal{C}$ and $z^f = z^g \pmod{I_{\mathcal{C}'}}$, then we have $z^{f + g} = z^f z^g = z^f z^f = z^{\underline{0}} \pmod{I_{\mathcal{C}'}}$, therefore $f + g \in \mathcal{C}'$.
Conversely, if $f + g \in \mathcal{C}'$, then we have $z^f z^g = z^{f + g} = z^{\underline{0}} = 1 \pmod{I_{\mathcal{C}'}}$, therefore $z^f = z^f z^g z^g = z^g \pmod{I_{\mathcal{C}'}}$.
Hence $\deg(\mathrm{nf}(z^f))$ is equal to the minimal degree of $z^g$ with $g \in \mathcal{C}$ satisfying that $f + g \in \mathcal{C}'$, therefore $d(f,\mathcal{C}') = \deg(\mathrm{nf}(z^f))$.
This argument reduces the FDP in this setting to the problem of computing (the degrees of) the normal forms of square-free monomials.
More precisely, let $h_i$ denote the number of monic monomials in $A_{\mathcal{C}'}$ whose normal forms have degree $i$, and put $s := \max\{i \mid h_i > 0\}$.
(If the ideal is homogeneous, then $(h_i)_i$ is called the Hilbert function and it does not depend on the choice of a monomial order.)
Now the above argument implies that $r(\mathcal{C},\mathcal{C}') = s$.
Moreover, if $\mathcal{C}'$ is a linear subspace of $\mathcal{C}$, then we have $h_i \cdot |\mathcal{C}'| = |\{ f \in \mathcal{C} \mid d(f,\mathcal{C}') = i \}|$, therefore the data $(h_i)_i$ express the distributions of the distances $d(f,\mathcal{C}')$ over the functions $f \in \mathcal{C}$.

Based on the above argument, Proposition \ref{prop:upper_bound_by_codimension} can be restated for the present case as follows:
\begin{proposition}
\label{prop:bound_by_Grobner_basis}
In the above setting, suppose that $\mathcal{C}'$ is a linear subspace of $\mathcal{C}$.
Then we have
\begin{equation}
\min\{ \ell \mid \sum_{i=0}^{\ell} \binom{n}{i} \geq \frac{2^n}{|\mathcal{C}'|} \} \leq r(\mathcal{C},\mathcal{C}') \leq \frac{2^n}{|\mathcal{C}'|} \enspace.
\end{equation}
\end{proposition}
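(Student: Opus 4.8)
The plan is to read both inequalities off the degree census $(h_i)_i$ of the normal-form basis of $A_{\mathcal{C}'}$, rather than from the linear-algebraic codimension used in Proposition~\ref{prop:upper_bound_by_codimension}; this is what makes the present bounds genuinely different from (and proved differently than) the general ones. I would start from the facts already established above: with respect to a graded monomial order one has $r(\mathcal{C},\mathcal{C}')=s=\max\{i\mid h_i>0\}$, each normal form $\mathrm{nf}(z^f)$ is a square-free monic monomial $z^g$, and the square-free monomials are partitioned into equivalence classes whose representatives are exactly the normal forms. Consequently the number of normal forms equals $\dim_K A_{\mathcal{C}'}=\sum_i h_i$, which is the quantity written $2^n/|\mathcal{C}'|$ in the statement. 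The whole proposition then reduces to controlling $(h_i)_i$ from below per degree (for the upper bound on $r$) and from above per degree (for the lower bound on $r$).

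For the upper bound $r(\mathcal{C},\mathcal{C}')\le 2^n/|\mathcal{C}'|$, the key structural fact is that the set of normal forms is an \emph{order ideal (down-set)} under divisibility: it is the complement of the initial ideal $\mathrm{in}_{\prec}(I_{\mathcal{C}'})$, and a monomial ideal is closed upward under divisibility, so every divisor of a normal form is again a normal form. Since $I_0\subset I_{\mathcal{C}'}$ and $\mathrm{in}_{\prec}(z_v{}^2-1)=z_v{}^2$ under the graded order, each $z_v{}^2$ lies in the initial ideal, so normal forms are square-free. Hence, given a normal form of degree $i\ge 1$, dividing out one variable $z_v$ produces a normal form of degree $i-1$; iterating shows every degree $0,1,\dots,s$ is represented, i.e.\ $h_i\ge 1$ for $0\le i\le s$. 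Therefore $\dim_K A_{\mathcal{C}'}=\sum_i h_i\ge s+1$, and using $s=r(\mathcal{C},\mathcal{C}')$ gives $r(\mathcal{C},\mathcal{C}')\le \dim_K A_{\mathcal{C}'}-1\le 2^n/|\mathcal{C}'|$.

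For the lower bound I would bound each $h_i$ by the number of monomials that can appear as a degree-$i$ normal form. Every normal form is a square-free monomial, and the number of square-free monomials of degree $i$, written $\binom{n}{i}$ in the statement, bounds $h_i$ from above, so $h_i\le\binom{n}{i}$. Summing over $0\le i\le s$ and using $\sum_{i=0}^{s} h_i=\dim_K A_{\mathcal{C}'}=2^n/|\mathcal{C}'|$ yields $\sum_{i=0}^{s}\binom{n}{i}\ge 2^n/|\mathcal{C}'|$. By the minimality defining the left-hand endpoint of the displayed interval, this forces $s\ge \min\{\ell\mid \sum_{i=0}^{\ell}\binom{n}{i}\ge 2^n/|\mathcal{C}'|\}$, which is exactly the asserted lower bound on $r(\mathcal{C},\mathcal{C}')$.

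The hard part, and the step that distinguishes this proof from the codimension argument of Proposition~\ref{prop:upper_bound_by_codimension}, is the down-set property underpinning the upper bound: it is precisely this staircase structure of the normal-form basis that turns the global count $\dim_K A_{\mathcal{C}'}$ into a bound on the top degree $s$. The lower bound is then a routine counting step once the per-degree census $h_i\le\binom{n}{i}$ and the total $\sum_i h_i=2^n/|\mathcal{C}'|$ are in hand. I would therefore spend most care on two bookkeeping verifications forced by the setup recalled above: that the chosen graded monomial order makes $\mathrm{in}_{\prec}(I_{\mathcal{C}'})$ a genuine monomial ideal containing each $z_v{}^2$ (so normal forms are square-free and closed under division), and that the per-degree and total monomial counts are exactly the constants $\binom{n}{i}$ and $2^n/|\mathcal{C}'|$ named in the statement. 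Both reduce to the identity, already obtained, that the normal forms are in bijection with the equivalence classes of square-free monomials modulo $I_{\mathcal{C}'}$.
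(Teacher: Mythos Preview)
Your proposal is correct and follows essentially the same approach as the paper's proof: both obtain the lower bound from $h_i\le\binom{n}{i}$ together with $\sum_i h_i=2^n/|\mathcal{C}'|$, and the upper bound from the down-set property of normal forms (divisors of normal forms are normal forms), which forces $h_i\ge 1$ for $0\le i\le s$. Your write-up is more detailed in justifying the down-set property via the initial ideal, but the structure of the argument is identical.
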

\begin{proof}
Note that the number of monic monomials in $A_{\mathcal{C}'}$ is $2^n / |\mathcal{C}'|$.
Then the lower bound follows from the fact that the normal form of each monic monomial is also a monic monomial and that there exist $\binom{n}{i}$ square-free monic monomials of degree $i$, hence $h_i \leq \binom{n}{i}$.
On the other hand, the upper bound is deduced from the fact that each divisor of a monic monomial of normal form is also of normal form, hence $h_i = 0$ if $h_j = 0$ and $j < i$.
This concludes the proof.
\end{proof}
For the case $\mathcal{C}' = \mathcal{C}'_k$ as discussed above, Table \ref{tab:example_Grobner} shows a calculation result of $r(\mathcal{C},\mathcal{C}'_k)$ and $(h_i)_i$ for small parameters $n$ and $k$, which is obtained by using computer algebra software \texttt{Singular/Sage}.
By the table, we have $r(\mathcal{C},\mathcal{C}'_k) = 6$ when $(n,k) = (4,1)$, as explained in Example \ref{exmp:geometric_FDP_game}.
Note that the values of $r(\mathcal{C},\mathcal{C}'_k)$ in Table \ref{tab:example_Grobner} are consistent with the lower bounds shown in Table \ref{tab:lower_bounds}.
\begin{table}[htb]
\centering
\caption{Computer calculation result for some small parameters}
\label{tab:example_Grobner}
\begin{tabular}{c|c|c|l}
$n$ & $k$ & $r(\mathcal{C},\mathcal{C}'_k)$ & $(h_i)_{i \geq 0}$ \\ \hline
$2$ & $1$ & $1$ & $(1,1)$ \\
$3$ & $1$ & $2$ & $(1,8,7)$ \\
$3$ & $2$ & $1$ & $(1,1)$ \\
$4$ & $1$ & $6$ & $(1,16,120,560,875,448,28)$ \\
$4$ & $2$ & $2$ & $(1,16,15)$ \\
$4$ & $3$ & $1$ & $(1,1)$
\end{tabular}
\end{table}

\subsection{Perfect codes and Reed--Solomon codes}
\label{subsec:example_FDP_RS-code}

In this subsection, we consider the case that $\mathcal{C}$ is an $n$-dimensional vector space over the $q$-element field $\mathbb{F}_q$, hence $\mathcal{C}$ is identified with $\mathbb{F}_q{}^n$, the distance $d(\cdot,\cdot)$ is defined to be the (generalized) Hamming distance (with respect to the vector expressions of elements), and $\mathcal{C}'$ is a linear subspace of $\mathcal{C}$ coming from the coding theory.
Let the subspace $\mathcal{C}'$ be an $(n,m,d)$-code, i.e., $\dim(\mathcal{C}') = m$ and the minimum distance of $\mathcal{C}'$ is $d$.
By the definition of minimum distance, we have the following well-known relation
\begin{equation}
\label{eq:example_minimum_distance}
q^m \sum_{i=0}^{\lfloor d/2 \rfloor} \binom{n}{i} (q-1)^i \leq q^n \enspace.
\end{equation}
This and the argument in Proposition \ref{prop:upper_bound_by_codimension} implies that $r(\mathcal{C},\mathcal{C}') \geq \lfloor d/2 \rfloor$.

We say that $\mathcal{C}'$ is a \emph{perfect code}, if the equality holds in \eqref{eq:example_minimum_distance}.
For a perfect code $\mathcal{C}'$, the above argument and Proposition \ref{prop:upper_bound_by_codimension} implies that $r(\mathcal{C},\mathcal{C}') = \lfloor d/2 \rfloor$.
For example, if $n = 2^k - 1$, $q = 2$ and $\mathcal{C}'$ is the Hamming code $H_k$ which is a $(2^k-1,2^k-k-1,3)$-code, then we have $r(\mathcal{C},\mathcal{C}') = 1$.
On the other hand, if $n = 23$, $q = 2$ and $\mathcal{C}'$ is the binary Golay code $G_{23}$ which is a perfect $(23,12,7)$-code, then we have $r(\mathcal{C},\mathcal{C}') = 3$.
(In the case of the extended Golay code $\mathcal{C}' = G_{24}$ which is a nearly perfect $(24,12,8)$-code, where we set $n = 24$ and $q = 2$, we also have $r(\mathcal{C},\mathcal{C}') = 4$ in a similar manner.)

As another concrete class of $\mathcal{C}'$ for which the quantity $r(\mathcal{C},\mathcal{C}')$ can be explicitly determined, from now we study the case of \emph{Reed--Solomon codes}, which is also an important class of linear codes.
We write $q = p^e$ with a prime number $p$ and an integer $e \geq 1$, and choose an integer $k$ with $1 \leq k < n$.
Take a primitive element $\alpha$ of $\mathbb{F}_q$, i.e., $\mathbb{F}_q{}^{\times} = \langle \alpha \rangle$.
Define a polynomial $G(x) \in \mathbb{F}_q[x]$ of degree $n-k$ by
\begin{equation}
G(x) := (x-1)(x-\alpha)(x-\alpha^2) \cdots (x-\alpha^{n-k-1}) \enspace.
\end{equation}
For any integer $j \geq 0$, let $P_j$ denote the set of polynomials in $\mathbb{F}_q[x]$ of degrees up to $j$, which is a $(j+1)$-dimensional $\mathbb{F}_q$-linear subspace of $\mathbb{F}_q[x]$.
We identify $P_{n-1}$ with $\mathcal{C}$ via the correspondence $\sum_{i=0}^{n-1} a_i x^i \mapsto \sum_{i=0}^{n-1} a_i v_i$, where $(v_0,\dots,v_{n-1})$ is a distinguished linear basis of $\mathcal{C}$.
Now we introduce the following two linear maps:
\begin{equation}
\varphi_{n,k} \colon P_{k-1} \to P_{n-1} \,,\, f(x) \mapsto G(x)f(x) \enspace,
\end{equation}
\begin{equation}
\psi_{n,k} \colon P_{n-1} \to \mathbb{F}_q{}^{n-k} \,,\, f(x) \mapsto (f(1),f(\alpha),f(\alpha^2),\dots,f(\alpha^{n-k-1})) \enspace.
\end{equation}
Let $\mathcal{C}'$ be the image of $\varphi_{n,k}$, which is a subspace of $\mathcal{C}$ (via the above identification $\mathcal{C} \simeq P_{n-1}$).
This $\mathcal{C}'$ is a Reed--Solomon code.
Note that $\mathcal{C}'$ coincides with the kernel of $\psi_{n,k}$.
Now we have the following result:
\begin{proposition}
\label{prop:example_RS-code}
In the above setting of Reed--Solomon code, we have $r(\mathcal{C},\mathcal{C}') = n - k$.
\end{proposition}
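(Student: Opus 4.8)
The plan is to obtain $r(\mathcal{C},\mathcal{C}') \le n-k$ for free from Proposition \ref{prop:upper_bound_by_codimension} and then to exhibit a single function $f \in \mathcal{C}$ with $d(f,\mathcal{C}') = n-k$, which pins down the value. For the upper bound, note that $\varphi_{n,k}$ is injective (it is multiplication by the nonzero polynomial $G$), so $\dim_{\mathbb{F}_q}\mathcal{C}' = \dim P_{k-1} = k$ and hence $\mathrm{codim}_{\mathcal{C}}(\mathcal{C}') = n-k$; since $\mathcal{C} \cong \mathbb{F}_q{}^{n}$ via the distinguished basis $(v_0,\dots,v_{n-1})$ and $d(\cdot,\cdot)$ is the corresponding (generalized) Hamming distance, Proposition \ref{prop:upper_bound_by_codimension} immediately gives $r(\mathcal{C},\mathcal{C}') \le n-k$.

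For the lower bound I would use the description $\mathcal{C}' = \ker\psi_{n,k}$. Since $\psi_{n,k}$ is surjective (its rank equals $n - \dim\mathcal{C}' = n-k$), I choose $f \in \mathcal{C}$ with $\psi_{n,k}(f) = (0,\dots,0,1)$, i.e. $f(\alpha^{j})=0$ for $0 \le j \le n-k-2$ and $f(\alpha^{n-k-1}) = 1$ (concretely, a suitable scalar multiple of $(x-1)(x-\alpha)\cdots(x-\alpha^{n-k-2})$). I claim $d(f,\mathcal{C}') \ge n-k$. Suppose not; then, since $\mathcal{C}'$ is a subspace, there is $g \in \mathcal{C}$ with $g-f \in \mathcal{C}'$ having at most $n-k-1$ nonzero coefficients. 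Because $g-f \in \ker\psi_{n,k}$, we get $g(\alpha^{j}) = f(\alpha^{j})$ for $0 \le j \le n-k-1$, so $g$ vanishes at the $n-k-1$ distinct points $1,\alpha,\dots,\alpha^{n-k-2}$ while $g(\alpha^{n-k-1}) = 1 \ne 0$; in particular $g \ne 0$. Writing $g = \sum_{i=1}^{m} a_i x^{e_i}$ with all $a_i \ne 0$, distinct $e_i \in \{0,\dots,n-1\}$ and $m \le n-k-1$, the vanishing conditions read $\sum_{i=1}^{m} a_i (\alpha^{j})^{e_i} = 0$ for $j = 0,\dots,n-k-2$.

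The decisive step is to rewrite $(\alpha^{j})^{e_i} = (\alpha^{e_i})^{j}$ and put $\beta_i := \alpha^{e_i}$, so that the system becomes $\sum_{i=1}^{m} a_i \beta_i{}^{j} = 0$ for $j = 0,\dots,n-k-2$. As the $e_i$ are distinct integers in $\{0,\dots,n-1\}$ and $n \le q-1$ (the order of $\alpha$ in $\mathbb{F}_q{}^{\times}$), the $\beta_i$ are pairwise distinct, so the coefficient matrix $(\beta_i{}^{j})_{0 \le j \le n-k-2,\ 1 \le i \le m}$ is an honest Vandermonde matrix with $m$ distinct nodes and $n-k-1 \ge m$ rows, hence of full column rank $m$. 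Therefore $a_1 = \dots = a_m = 0$, contradicting $a_i \ne 0$ and forcing $g = 0$, which is absurd since $g(\alpha^{n-k-1}) \ne 0$. Hence $d(f,\mathcal{C}') \ge n-k$, and together with the upper bound this gives $r(\mathcal{C},\mathcal{C}') = n-k$.

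The only point I expect to require care is the rewriting $(\alpha^{j})^{e_i} = (\alpha^{e_i})^{j}$: a priori $\sum_i a_i (\alpha^{j})^{e_i} = 0$ is a ``generalized Vandermonde'' system, which over a finite field can be rank-deficient, and it is precisely because the evaluation points are consecutive powers of the fixed primitive element $\alpha$ (and because $n \le q-1$, keeping the $\alpha^{e_i}$ distinct) that the system can be recast as an ordinary Vandermonde one. Everything else -- the identification $\mathcal{C} \cong P_{n-1}$, the equalities $\dim\mathcal{C}' = k$ and $\mathcal{C}' = \ker\psi_{n,k}$, and the invocation of Proposition \ref{prop:upper_bound_by_codimension} -- is routine bookkeeping.
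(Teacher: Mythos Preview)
Your proof is correct and follows essentially the same route as the paper: the upper bound via Proposition~\ref{prop:upper_bound_by_codimension}, the same witness $f$ (vanishing at $1,\alpha,\dots,\alpha^{n-k-2}$ and nonzero at $\alpha^{n-k-1}$), and the same Vandermonde trick of rewriting $(\alpha^j)^{e_i}$ as $(\alpha^{e_i})^j$ so that distinctness of the exponents $e_i$ (using $n \le q-1$) yields distinct nodes. The only cosmetic difference is that the paper keeps exactly $n-k-1$ monomial slots (allowing zero coefficients) and inverts the square $(n-k-1)\times(n-k-1)$ Vandermonde block, whereas you work with the actual support size $m \le n-k-1$ and invoke full column rank; both are the same argument.
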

\begin{proof}
As $\dim(\mathcal{C}') = k$, the inequality $r(\mathcal{C},\mathcal{C}') \leq n - k$ follows from Proposition \ref{prop:upper_bound_by_codimension}.
From now, we show that $r(\mathcal{C},\mathcal{C}') \geq n - k$, or equivalently, there exists an element $u \in \mathcal{C}$ satisfying that $d(u,\mathcal{C}') \geq n-k$.

For each polynomial $f(x) \in P_{n-1}$, the condition $d(f(x),\mathcal{C}') \leq n-k-1$ is equivalent to the following: There exist indices $0 \leq \nu_1 < \nu_2 < \cdots < \nu_{n-k-1} \leq n-1$ and coefficients $c_j \in \mathbb{F}_q$ ($1 \leq j \leq n-k-1$) for which we have $f(x) - \sum_{j=1}^{n-k-1} c_j x^{\nu_j} \in \mathcal{C}' = \ker \psi_{n,k}$, or equivalently,
\begin{equation}
\label{eq:prop_example_RS-code__condition}
f(\alpha^i) = \sum_{j=1}^{n-k-1} c_j \beta_{\nu_j}{}^i \mbox{ for every } 0 \leq i \leq n-k-1 \enspace,
\end{equation}
where we put $\beta_{\nu_j} := \alpha^{\nu_j}$.
The condition \eqref{eq:prop_example_RS-code__condition} can be expressed as
\begin{equation}
\left( 
\begin{array}{c}
f(\alpha^0) \\ f(\alpha^1) \\ \vdots \\ f(\alpha^{n-k-1})
\end{array}
\right) = \left( 
\begin{array}{cccc}
\beta_{\nu_1}{}^0 & \beta_{\nu_2}{}^0 & \cdots & \beta_{\nu_{N-K-1}}{}^0 \\ 
\beta_{\nu_1}{}^1 & \beta_{\nu_2}{}^1 & \cdots & \beta_{\nu_{N-K-1}}{}^1 \\ 
\vdots & \vdots & \vdots & \vdots \\ 
\beta_{\nu_1}{}^{n-k-1} & \beta_{\nu_2}{}^{n-k-1} & \cdots & \beta_{\nu_{n-k-1}}{}^{n-k-1}
\end{array}
\right) \vec{c} \enspace,
\end{equation}
where $\vec{c}$ denotes the column vector ${}^t(c_1,c_2,\dots,c_{n-k-1})$.
For simplicity, let $B$ and $\vec{b}$ denote, respectively, the first $n-k-1$ rows and the last row of the above matrix; i.e., the above condition is written as
\begin{equation}
{}^t(f(\alpha^0), f(\alpha^1),\dots,f(\alpha^{n-k-1})) = \left( 
\begin{array}{c}
B \\ \vec{b}
\end{array}
\right) \vec{c} \enspace.
\end{equation}
Now, as $\alpha$ is a primitive element of $\mathbb{F}_q$, all $\beta_{\nu_j}$ are distinct with each other and hence $B$ is a Vandermonde matrix which is invertible.
Therefore, the condition \eqref{eq:prop_example_RS-code__condition} implies that $\vec{c} = B^{-1} \cdot {}^t(f(\alpha^0),f(\alpha^1),\dots,f(\alpha^{n-k-2}))$ and $f(\alpha^{n-k-1}) = \vec{b} \vec{c}$.
On the other hand, the latter condition is not satisfied when $f(\alpha^i) = 0$ for every $0 \leq i \leq n-k-2$ and $f(\alpha^{n-k-1}) \neq 0$, e.g., $f(x) = \prod_{i=0}^{n-k-2} (x - \alpha^i)$.
Hence this element $f(x) \in \mathcal{C}$ satisfies that $d(f(x),\mathcal{C}') \geq n-k$, as desired.
This concludes the proof of Proposition \ref{prop:example_RS-code}.
\end{proof}

\section{Concluding Remarks}
\label{sec:conclusion}

In this paper, we first specified a class of mathematical problems, which we call Function Density Problems.
Then we pointed out novel connections of Function Density Problems to theoretical security evaluations of keyless hash functions and to constructions of provably secure pseudorandom generators with some enhanced security property.
Our argument aimed at proposing new theoretical frameworks for these topics (especially for the former) based on Function Density Problems, rather than providing some concrete and practical results on the topics.
We also gave some examples of mathematical discussions on the problems, which would be of independent interest from mathematical viewpoints.

To conclude this paper, we discuss some possible directions of future works.
First, there exist some cryptographic protocols for which the constructions are motivated by some NP-complete/NP-hard problems, but actually the distributions of the problem instances in the protocols are somewhat biased, therefore it has not succeeded to prove the security of the protocols directly from the hardness of the underlying problems (e.g., McEliece cryptosystem and other code-based protocols relevant to decoding problem for random linear codes; knapsack cryptosystem relevant to Subset Sum Problem; etc.).
We hope that the idea of Function Density Problems can be applied to measure the closeness of the approximations of the underlying hard problems in those protocols.
Secondly, for the mathematical characteristics of Function Density Problems, it would be interesting to evaluate the computational difficulty of Function Density Problems (e.g., to prove, if possible, that Function Density Problems are NP-hard).
Moreover, as the examples of Function Density Problems in this paper are for the case that the subset $\mathcal{C}'$ of $\mathcal{C}$ forms a linear subspace, it would be also significant to study the other cases that $\mathcal{C}'$ is not a linear subspace of $\mathcal{C}$.

\section*{Acknowledgments}

A preliminary version of this paper was presented at The 6th International Workshop on Security (IWSEC 2011), November 8--10, 2011 \cite{NAKMN11}.
The authors would like to thank the anonymous referees of IWSEC 2011 and for other submissions of the work for their kind reviews and comments.
The authors would also like to thank Goichiro Hanaoka for his precious comment on potential applications of the subject of this paper discussed in Section \ref{sec:conclusion}.


\begin{thebibliography}{00}

\bibitem{BHST14}
E.~Blais, J.~H{\aa}stad, R.~A.~Servedio and L.-Y.~Tan,
On DNF approximators for monotone boolean functions,
\textit{Proc.\ ICALP 2014}, Part I (2014), pp.\ 235--246.

\bibitem{BBS86}
L.~Blum, M.~Blum and M.~Shub,
A simple unpredictable pseudo-random number generator,
\textit{SIAM J.\ Comput.} \textbf{15} (1986) 364--383.

\bibitem{CW77}
J.~L.~Carter and M.~N.~Wegman,
Universal classes of hash functions (extended abstract),
\textit{Proc.\ STOC 1977} (1977), pp.\ 106--112.

\bibitem{DI06}
B.~Dubrov and Y.~Ishai,
On the randomness complexity of efficient sampling,
\textit{Proc.\ STOC 2006} (2006), pp.\ 711--720.

\bibitem{FSS07}
R.~R.~Farashahi, B.~Schoenmakers and A.~Sidorenko,
Efficient pseudorandom generators based on the DDH assumption,
\textit{Proc.\ PKC 2007} (2007), pp.\ 426--441.

\bibitem{HSSW10}
K.~Henshall, P.~Schachte, H.~S{\o}ndergaard and L.~Whiting,
An algorithm for affine approximation of binary decision diagrams,
\textit{Chic.\ J.\ Theoret.\ Comput.\ Sci.} (2010), Article 11, pp.\ 1--26.

\bibitem{MI88}
T.~Matsumoto and H.~Imai,
Public quadratic polynomial-tuples for efficient signature-verification and message-encryption,
\textit{Proc.\ EUROCRYPT 1988} (1988), pp.\ 419--453.

\bibitem{NAKMN11}
K.~Nuida, T.~Abe, S.~Kaji, T.~Maeno and Y.~Numata,
A mathematical problem for security analysis of hash functions and pseudorandom generators,
\textit{Proc.\ IWSEC 2011} (2011), pp.\ 144--160.


\bibitem{NH13}
K.~Nuida and G.~Hanaoka,
On the security of pseudorandomized information-theoretically secure schemes,
\textit{IEEE Trans.\ Inform.\ Theory} \textbf{59}(1) (2013) pp.\ 635--652.

\bibitem{RSA78}
R.~Rivest, A.~Shamir and L.~Adleman,
A method for obtaining digital signatures and public-key cryptosystems,
\textit{Commun.\ ACM} \textbf{21}(2) (1978) 120--126.

\bibitem{Rog06}
P.~Rogaway,
Formalizing human ignorance -- collision-resistant hashing without the keys,
\textit{Proc.\ VIETCRYPT 2006} (2006), pp.\ 211--228.

\bibitem{WYY05}
X.~Wang, Y.~L.~Yin and H.~Yu,
Finding collisions in the full SHA-1,
\textit{Proc.\ CRYPTO 2005} (2005), pp.\ 17--36.

\end{thebibliography}
\end{document}